\newif\if@restonecol
\newtheorem{corollary}{Corollary}
\theoremstyle{definition}
\newtheorem{theorem}{Theorem}
\newtheorem{lemma}{Lemma}
\newtheorem{proposition}{Proposition}
\newcommand{\biggg}{\bBigg@{3}}
\newcommand{\Biggg}{\bBigg@{3.5}}
\begin{document}

\title{ Rotatable Array-Aided Hybrid Beamforming for Integrated Sensing and Communication}

\author{Zequan~Wang, Liang~Yin, Zimeng~Lei, Yitong~Liu, Yunan~Sun, and Hongwen~Yang

\thanks{Z. Wang, L. Yin, Z. Lei, Y. Liu, Y. Sun, and H. Yang are with the School of Information and Communication Engineering, Beijing University of Posts and Telecommunications, Beijing, 100876, China (e-mail: \{zequanwang, YinL, leizimeng, liuyitong, sunyunan, yanghong\}@bupt.edu.cn).}
\thanks{(Corresponding author: Liang~Yin)}}

\maketitle

\begin{abstract}
Six-dimensional movable antenna (6DMA) technology has been proposed to enhance the performance of Integrated Sensing and Communication (ISAC) systems. 
However, within 6DMA-related research, studies on the ISAC system based on rotatable array (RA) remains relatively limited. 
Given the significant advantages of hybrid beamforming technology in balancing system performance and hardware complexity, this paper focuses on a channel model that accounts for the efficiency of the antenna radiation pattern and studies the sub-connected hybrid beamforming design for multi-user RA-aided ISAC systems. 
Aiming at the non-convex nature with coupled variables in this problem, this paper transforms the complex fractional objective function using the Fractional Programming (FP) method, and then proposes an algorithm based on the Alternating Optimization (AO) framework, which achieves optimization by alternately solving five subproblems. 
For the analog beamforming optimization subproblem, we adopt Singular Value Decomposition (SVD) method to transform the objective function, thereby deriving the closed-form update expression for the analog beamforming matrix. For the antenna rotation optimization subproblem, we derive the closed-form derivative expression of the array rotation angle and propose a two-stage Gradient Ascent (GA) based method to optimize the antenna rotation angle. 
Extensive simulation results demonstrate the effectiveness of the proposed RA-aided hybrid beamforming design method. It not only significantly improves the overall system performance while reducing hardware costs, but also achieves performance comparable to that of the fully-digital beamforming design with fixed-position antennas (FPA) under specific parameter configurations. 
\end{abstract}

\begin{IEEEkeywords}
Integrated sensing and communication (ISAC), 6D movable antenna, antenna rotation optimization, alternating optimization, hybrid analog-digital beamforming.
\end{IEEEkeywords}

\section{Introduction}
Integrated sensing and communication (ISAC) has become a highly promising technology in the sixth-generation (6G) wireless network \cite{Liu2018}\cite{Liu2022}. Its core advantage is the ability to share hardware platforms and signal bandwidths, which helps achieve efficient resource use for communication and radar sensing. This feature not only greatly improves spectrum efficiency, but also cuts hardware costs substantially \cite{Liu2020}. Due to these advantages, ISAC has shown great potential in many applications. These include vehicle-to-everything (V2X), industrial internet of things (IIoT), and environmental monitoring \cite{Liu2022}. In addition, ISAC has attracted wide attention from both academia and industry. It integrates a series of emerging wireless technologies, such as reconfigurable intelligent surface (RIS) \cite{Sun2025} \cite{He2022}, unmanned aerial vehicle (UAV) \cite{Meng2024}, and non-orthogonal multiple access (NOMA) \cite{Wang2022}\cite{Lyu2024}. These technologies further enhance ISAC's performance and application potential.

With the continuous evolution of 6G communication systems, the large-scale application of high-frequency millimeter wave and large antenna array has become an irreversible technological trend. By increasing antenna numbers, large antenna array can significantly improve beamforming gain and enhance spatial resolution, thereby effectively improving the communication and sensing performance. However, the fully-digital beamforming scheme requires each antenna unit to be equipped with an independent radio frequency chain, which makes the deployment of large antenna array face the challenge of a sharp rise in hardware costs and power consumption, seriously restricting its popularization and application in practical scenarios. In contrast, the hybrid beamforming \cite{Yu2016}\cite{S2019}, relying on a small number of RF chains and an analog front-end consisting of phase shifters (PSs), has been considered as a promising technique to achieve a good trade-off between cost and communication performance.

Based on the front-end design of PS circuits, the hybrid beamformer can be categorized into fully-connected structures and sub-connected structures \cite{Du2018}. In the fully-connected structure, all RF chains are connected to all antenna elements through PSs, which endows a higher system degree of design freedom (DoFs). However, it also results in high hardware costs and energy consumption overheads, which is particularly prominent in the scenario of large-scale antenna array in millimeter wave systems. In contrast, the sub-connected structure, by connecting each RF chain only to a part of the antenna elements, has higher energy efficiency and is easier for engineering implementation. To improve system performance under the constraint of a limited number of RF chains, numerical optimization algorithms have been proposed\cite{Wang2022}\cite{Yu2016}\cite{Zhu2024}. \cite{Wang2022} proposes a method to directly optimize the hybrid beamformer in an iterative manner. \cite{Yu2016} designs a fully-digital beamformer, then iteratively optimizes the hybrid beamformer to gradually approximate the performance of the fully-digital solution. \cite{Zhu2024} introduces a continuous auxiliary variable to replace the beamforming variable with a unit modulus constraint, optimizes the auxiliary variable, extracts its phase, and finally obtains the design result of the beamformer.
Nevertheless, most of the existing fully-connected and sub-connected hybrid beamforming schemes rely on fixed position antennas (FPA), and the transmitter/receiver fails to fully exploit the spatial variation of wireless channels, thus ignoring the gain improvement that can be brought about by dynamic channel adjustment.

ISAC aims primarily to enhance communication capacity and sensing ability \cite{An2023}. In recent years, movable antenna (MA) and six-dimensional MA (6DMA) technologies have been proposed to improve the performance of wireless communication \cite{ZhuL20241}-\cite{Shao20253}. Specifically, compared with the traditional MA which can only dynamically adjust the position of antenna elements within a specific feasible region, the 6DMA can flexibly regulate its spatial position and rotation angle, and integrate the DoFs of the antenna in three-dimensional (3D) position and rotation dimensions, thus providing customized sensing and communication services more efficiently \cite{Shao2025}-\cite{Shao20253}. Several studies have investigated the integration of MA to enhance communication capacity \cite{ZhuL20242}-\cite{Sun20252}. Specifically, the authors analyze the performance of the MA-enabled wireless communication system in \cite{ZhuL20242}. They derive closed-form expressions for channel gain under deterministic and random channel environments, verifying the effectiveness of MA in improving channel capacity.  
\cite {Ding2025} investigates an MA-enabled full-duplex ISAC system, which considers the scenario of user uplink and downlink transmissions and sets the sum of user uplink/downlink communication rate and target sensing rate as the optimization objective.
The authors in \cite{Lyu2025} conduct research on ISAC systems with clutter interference, deriving in detail the derivative of channel gain with respect to antenna position, and optimizing the antenna position based on the gradient ascent method.
The authors in \cite{Zhang2025}\cite{Zhang2024} propose a scheme for MA-aided multi-user communication scenarios under hybrid precoding design.
The authors in \cite{Sun20251} study the trade-off between communication and sensing in MA-enabled ISAC system, and adopts the Particle Swarm Optimization (PSO) algorithm to determine the optimal position of the antenna.
\cite{Zar2024} explores the application of MA in multi-target detection, and the results show that MA can significantly reduce transmission power while ensuring communication and sensing performance.
\cite{Sun20252} investigates multi-user communication systems supported by the collaboration of RIS and MA.

Although the advantages of MA in ISAC systems have been validated, effectively enhancing communication and sensing performance by optimizing antenna positions remains a challenging issue. The main difficulty arises from the complex expression of the wireless channel with respect to the positions of the antennas, causing non-convex expressions involving coupling variables.
Existing studies mostly focus on MA-enabled ISAC systems with adjustable positions. In contrast, the investigation of ISAC systems considering antenna rotatability is still in its infancy. Generally, antenna rotation can optimize beam pointing more flexibly, and can provide customized sensing and communication services more efficiently, especially when considering a channel model that accounts for the efficiency of the antenna radiation pattern \cite{Wang2023}.
Thus, this paper studies the hybrid beamforming design for multi-user multi-input-single-output (MU-MISO) ISAC systems based on rotatable arrays, aiming to reduce hardware cost and complexity while improving the communication and sensing performance of the system.
To the best of the authors' knowledge, this is among the first works to explore RA-based ISAC systems. The main contributions of this paper are summarized as follows.  
\begin{enumerate}
    \item We study a hybrid beamforming design for an RA-aided ISAC system, where the channel model accounts for the efficiency of the antenna radiation pattern. To fully exploit the performance gains brought by the array rotation to the ISAC system, flexibly rotatable transmit and receive antenna panels are deployed at the BS to provide communication services for multiple users while sensing a target. 
    We aim to maximize the weighted sum of each user's communications rate and the target sensing rate by jointly optimizing the transmit digital beamformer, transmit analog beamformer, receive beamformer, and antenna rotation angles to balance sensing accuracy and communication efficiency.
    \item We propose an alternating optimization (AO) method to solve the formulated non-convex optimization problem with highly-coupled variables. Specifically, the minimum mean square error (MMSE) filtering method is first adopted to optimize the receive beamformer. Then, facilitated by fractional programming (FP), the remaining problem is decomposed into four subproblems to handle the coupled variables. 
    For the hybrid beamforming optimization, we adopt singular value decomposition (SVD) and Karush-Kuhn-Tucker (KKT) conditions to derive the closed-form solutions, thereby solving the analog beamforming optimization problem and the digital beamforming optimization problem. 
    Then the closed-form derivative expression of the array rotation angle is derived, and a gradient ascent (GA) method is proposed to solve the array rotation angle optimization problem.
    \item We perform numerical simulations to evaluate the advantages of sub-connected RA-aided ISAC system and the effectiveness of the proposed algorithm. Simulation results demonstrate that the RA-aided ISAC system outperforms the FPA-based ISAC system due to the additional DoFs introduced by array rotation optimization.
    Moreover, under certain practical conditions, the sub-connected RA-aided ISAC system yields higher communication and sensing performance compared to the fully-connected FPA-based system and the fully-digital FPA-based system. 
\end{enumerate}

Notation: $a/A$, $\mathbf{a}$, $\mathbf{A}$, and $\mathcal{A}$ denote a scalar, a vector, a matrix, and a set, respectively. $(\cdot)^T$, $(\cdot)^H$, $\|\cdot\|_2$, $|\cdot|$, $\|\cdot\|_F$, $\operatorname{Tr}\{\cdot\}$, and $\operatorname{Rank}\{\cdot\}$ denote the transpose, conjugate transpose, Euclidean norm, absolute value, Frobenius matrix norm, trace, and rank, respectively. $j=\sqrt{-1}$ represents the imaginary unit. $\mathbb{C}^{M \times N}$ and $\mathbb{R}^{M \times N}$ are the sets for complex and real matrices of $M \times N$ dimensions, respectively. $\mathbf{I}_N$ is the identity matrix of order $N$. $\mathcal{CN}(0, \sigma^2)$ represents the circularly symmetric complex Gaussian (CSCG) distribution with mean zero and covariance $\sigma^2$. Finally, $[\cdot]_{(m,n)}$ denotes the $(m,n)$-th element of a matrix.
\section{System Model}
\subsection{RA-BS Model}
{As illustrated in Fig. \ref{scene}, this paper considers an RA-aided far-field MU-MISO system. The system comprises a dual-functional radar and communication (DFRC) BS serving $K$ single-FPA users, sensing one target, and considering $C$ clutters as interference for sensing. 
The BS is equipped with $N_t$ transmit antennas and $N_r$ receive antennas, which are respectively fixed on a rotatable transmit antenna panel (TP) and a rotatable receive antenna panel (RP), each with an area of $D$. 
To reduce power consumption and hardware complexity of the BS, the BS employs the sub-connected hybrid array, which is divided into $N_{RF}$ nonoverlapped subarrays, each containing $M=N_\mathrm t/N_{RF} $ antennas connected to an RF chain through PSs. $M$ is an integer for brevity.
A bistatic setup is adopted at the BS, where TP and RP are separated to avoid challenging issues in full-duplex mode. Thus, self-interference (SI) can be significantly suppressed when the TP and RP are sufficiently separated. Therefore, this paper assumes that the SI is completely cancelled. 
\begin{figure}[!t]
    \centering
    \includegraphics[height=0.36\textwidth]{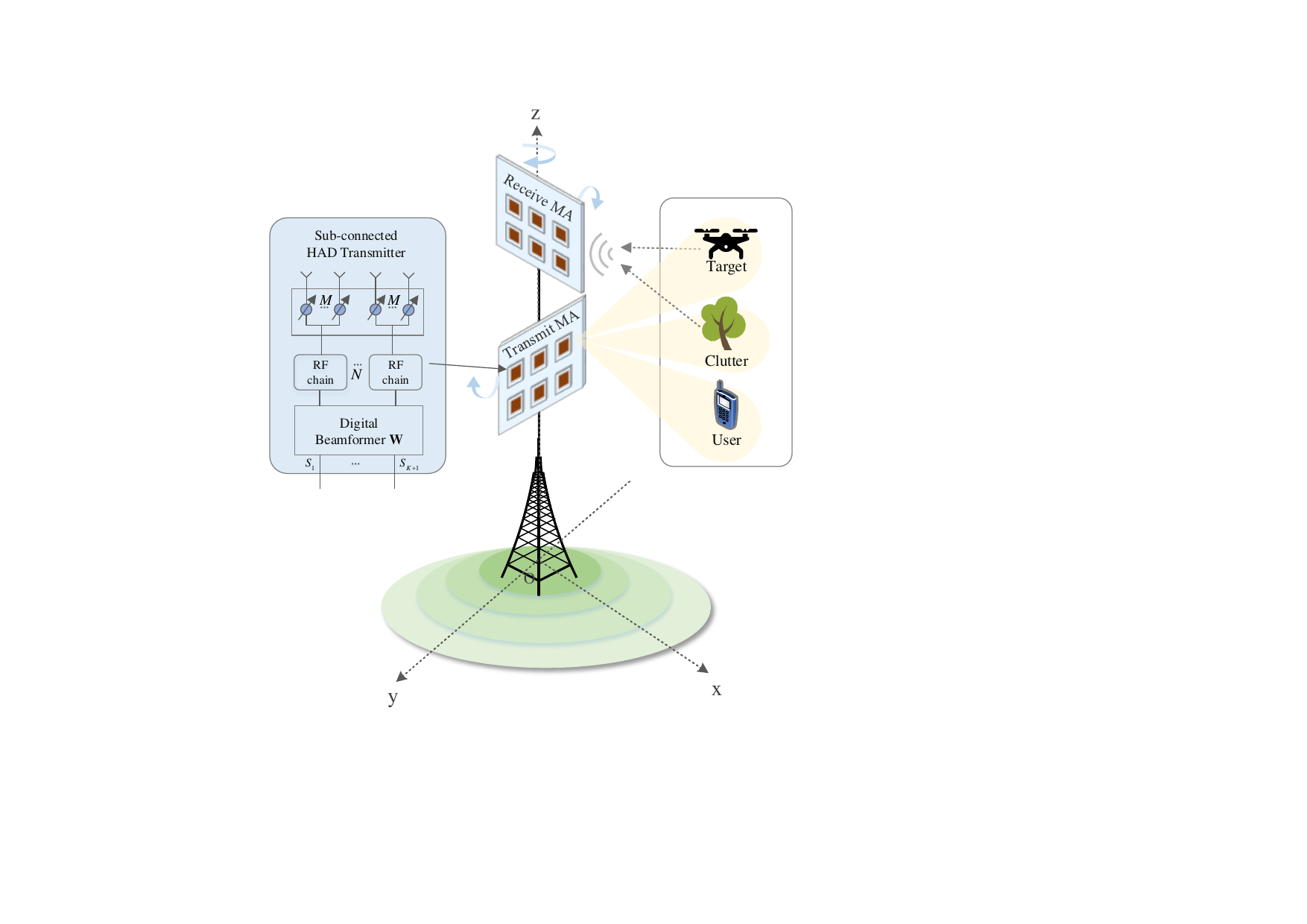}
    \captionsetup{font={small}}
    \caption{\justifying Illustration of the proposed RA-aided far-field ISAC system.}
    \label{scene}
\end{figure}

To avoid coupling effects, the distance between any two antennas is always no less than half a wavelength. Without loss of generality, the antennas lie within the $y-z$ plane of the local Cartesian coordinate system defined by the center of the corresponding plane, and the local positions of the $N_t$-th transmit antenna and the $N_r$-th receive antenna can be respectively represented as
\begin{equation}
\begin{aligned}
\mathbf q_{n_t}^\mathrm t=[0,y^\mathrm t_{n_t},z^\mathrm t_{n_t}]^\mathrm T \in A_t,
\end{aligned}
\end{equation}
\begin{equation}
\begin{aligned}
\mathbf q_{n_r}^\mathrm r=[0,y^\mathrm r_{n_r},z^\mathrm r_{n_r}]^\mathrm T \in A_r.
\end{aligned}
\end{equation}

The transmit and receive arrays are connected to the CPU via rotatable rods embedded with flexible wires, and thus 3D rotations can be adjusted. The 3D rotation can be respectively characterized by
\begin{equation}
\begin{aligned}
\boldsymbol\varpi^\mathrm q=[\alpha^\mathrm q,\beta^\mathrm q,\gamma^\mathrm q]^\mathrm T, q \in\{\mathrm t, \mathrm r\},
\end{aligned}
\end{equation}
where $\alpha^\mathrm q \in [0,2\pi)$, $\beta^\mathrm q \in [0,2\pi)$, $\gamma^\mathrm q \in [0,2\pi)$ denote the rotation angles with respect to (w.r.t.) the $x$-axis, $y$-axis and $z$-axis, respectively.

Given $\boldsymbol\varpi$, the rotation matrix can be defined as

\begin{equation}
\begin{split}
&\mathbf{R}(\boldsymbol{\varpi}^q) = \mathbf R_{\alpha_q} \mathbf R_{\beta_q} \mathbf R_{\gamma_q}\\
&=\begin{bmatrix}
c_{\beta_q}c_{\gamma_q} & c_{\beta_q}s_{\gamma_q} & -s_{\beta_q} \\
s_{\beta_q}s_{\alpha_q}c_{\gamma_q} - c_{\alpha_q}s_{\gamma_q} & s_{\beta_q}s_{\alpha_q}s_{\gamma_q} + c_{\alpha_q}c_{\gamma_q} & c_{\beta_q}s_{\alpha_q} \\
c_{\alpha_q}s_{\beta_q}c_{\gamma_q} + s_{\alpha_q}s_{\gamma_q} & c_{\alpha_q}s_{\beta_q}s_{\gamma_q} - s_{\alpha_q}c_{\gamma_q} & c_{\alpha_q}c_{\beta_q}
\end{bmatrix},
\end{split}
\end{equation}
with
\begin{equation}
\begin{aligned}
    \mathbf R_{\alpha_q}=
\begin{bmatrix}
1 & 0 & 0 \\
0 & c_{\alpha_q} & s_{\alpha_q} \\
0 & - s_{\alpha_q} & c_{\alpha_q}
\end{bmatrix},\\
    \mathbf R_{\beta_q}=
\begin{bmatrix}
c_{\beta_q} & 0 & -s_{\beta_q} \\
0 & 1 & 0 \\
s_{\beta_q} & 0 & c_{\beta_q}

\end{bmatrix},\\
    \mathbf R_{\gamma_q}=
\begin{bmatrix}
c_{\gamma_q} & s_{\gamma_q} & 0 \\
-s_{\gamma_q} & c_{\gamma_q} & 0 \\
0 & 0 & 1
\end{bmatrix},
\end{aligned}
\end{equation}
denoting the rotation matrices w.r.t. each of the $x$-axis, $y$-axis and $z$-axis, respectively, where $c_x$ = cos($x$) and $s_x$ = sin($x$) are defined for notational simplicity.
Then the positions of the $n_t$-th transmit antenna and the $n_r$-th receive antenna in the global Cartesian coordinate system can be expressed as
\begin{equation}
\begin{aligned}
    \mathbf t_{n_\mathrm t}=\mathbf R(\boldsymbol{\varpi}^t)\mathbf q_{n_t}^\mathrm t+\mathbf C^\mathrm t,
\end{aligned}
\end{equation}

\begin{equation}
\begin{aligned}
    \mathbf r_{n_\mathrm r}=\mathbf R(\boldsymbol{\varpi}^r)\mathbf q_{n_r}^\mathrm r+\mathbf C^\mathrm r,
\end{aligned}
\end{equation}
where $\mathbf C^\mathrm t$ and $\mathbf C^\mathrm r$ denote the centers'positions of TP and RP, respectively. 
}
\subsection{Channel Model}
Considering a far-field quasi-static channel, the communication users (CUs) and the target are located in the far-field region of the BS. Since the size of  each UPA's region is much smaller than the signal propagation distance from the BS to the scatter points/receivers, the planar wave is adopted to construct the field-response array model. The pointing vector of the CUs/target corresponding to the direction $(\theta,\phi)$ is given by 
\begin{equation}
\mathbf p=[\sin \theta \cos \phi,\sin \theta \sin \phi,\cos \theta ]^T, 
\end{equation}
where $\theta \in [0, \pi]$ and $\phi \in [-\frac{\pi}{2},\frac{\pi}{2}]$ denote the elevation and azimuth angles, respectively. The difference in signal path distances from the target position $\mathbf p$ to the  $n_t$-th transmit antenna and from $\mathbf p$ to the $n_r$-th receive antenna are represented by $\mathbf p^T \mathbf t_{n_\mathrm t}(\boldsymbol {\varpi}^t) $ and $\mathbf p^T \mathbf r_{n_\mathrm r}(\boldsymbol {\varpi}^r) $, and the corresponding complex forms of phase difference are given by $e^{-j\frac{2\pi}{\lambda}\mathbf p^T \mathbf t_{n_\mathrm t}(\boldsymbol {\varpi}^t) }$ and $e^{-j\frac{2\pi}{\lambda}\mathbf p^T \mathbf r_{n_\mathrm r}(\boldsymbol {\varpi}^r) }$. Taking $\mathbf p$ as the reference point, the steer vector of the TP is
\begin{equation}
\begin{aligned}
    \mathbf a(\boldsymbol {\varpi}^t )=[e^{-j\frac{2\pi}{\lambda}\mathbf p^T \mathbf t_{1}(\boldsymbol {\varpi}^t) },\dots,e^{-j\frac{2\pi}{\lambda}\mathbf p^T \mathbf t_{n_\mathrm t}(\boldsymbol {\varpi}^t) }]^T,
\end{aligned}
\end{equation}
where $\lambda$ denotes the carrier wavelength.

As mentioned in \cite{Shao20253}, the rotatability of both transmit and receive arrays must meet certain rotation constraints to avoid mutual signal reflections between them. In this paper, however, a relatively large distance is maintained between the TP and the RP to address this issue. Therefore, it is assumed that there is no mutual signal reflection.

To make the channel model more general, as well as evaluating the influence of array orientation tuning precisely, we take into account the effect of the antenna radiation pattern \cite{YangS2024}. According to \cite{Wang2023}, since the signals are transmitted by different array elements and received by receiving array elements from distinct angles, there is a significant difference in the pattern of the antennas. The effective antenna pattern is determined by the projection of the array normal to the direction of the signal. Consequently, the channel power gain between the $n_t$-th transmit antenna and the $k$-th CU/target $\mathbf p$ is given by
\begin{equation}
\begin{aligned}
    |h^\mathrm t_{n_t,k}(\mathbf p)|=\rho\sqrt{ \frac{(\mathbf p - \mathbf t _{n_\mathrm t})^\mathrm T \mathbf u^\mathrm t}{||\mathbf p - \mathbf t _{n_\mathrm t}||}},
\end{aligned}
\end{equation}
where $\rho$ is the pass loss for the $k$-th user, $\mathbf u^\mathrm p = \mathbf R(\boldsymbol {\varpi}^\mathrm p) \mathbf u_0$ denotes the outward normal vector of the TP or RP, $\mathbf u_0=[1,0,0]^T$ denotes the original normal vector, and $\frac{(\mathbf p- \mathbf t _{n_\mathrm t})}{||\mathbf p- \mathbf t _{n_\mathrm t}||}$ represents the projection of the array normal to the direction of the signal.

Consequently, The communication channel vector $\mathbf h_k \in \mathbb C^{N_t \times 1}$ between the $k$-th CU and the TP can be derived as
\begin{equation}
    \mathbf h_k= [|h^\mathrm t_{n_1,k}(\mathbf p)|\mathbf [a(\boldsymbol {\varpi}^t )]_1,\dots,|h^\mathrm t_{n_t,k}(\mathbf p)|\mathbf [a(\boldsymbol {\varpi}^t )]_{n_\mathrm t}]^\mathrm T
\end{equation}
and also, the field response vector from the BS to the sensing target/clutters can be written as $\mathbf G_\chi  \in \mathbb C^{N_r \times N_t} $, which is modeled as
\begin{equation}
    \mathbf G_\chi= \beta_\chi \mathbf g_\chi \mathbf h_\chi^H,
\end{equation}
where $\beta_\chi$ represents the target radar cross section (RCS) with $\mathbb E\{|\beta_\chi|^2\}=\zeta^2$, $\chi=\{s,1,\dots,C\}$ represents the target or the clutters, $\mathbf g_\chi$ and $\mathbf h_\chi$ represent the channel vector between the transmitter and target as well as between the receiver and target, respectively.
\subsection{Signal Model}
Let $\mathbf s \in \mathbb C^{(K+1) \times 1}$ be the independent and identically distributed (i.i.d.) signal vector, with $\mathbb E(\mathbf {ss}^\mathrm H) = \mathbf I_{\mathrm K+1}$, where $s_1,\dots,s_K$ are for $K$ CUs, respectively, and $s_{K+1}$ is dedicated for sensing. Assume that the perfect channel state information (CSI) of communication channels is perfectly known at the BS via a proper channel estimation mechanism. The transimit beamforming matrix is
\begin{equation}
    \mathbf{W} = [ \underbrace{\mathbf{w}_1, \dots, \mathbf{w}_K}_{{\text{for communication and sensing}}}, \underbrace{\mathbf{w}_{K+1}}_{{\text{dedicated for sensing}}}] \in \mathbb{C}^{N_t \times (K+1)}.
\end{equation}
Then the received signal at the $k$-th user is given by
\begin{equation}
\begin{aligned}
    y_k &= \mathbf h^H_k \mathbf {FWs}+n_k \\
    &=\underbrace{\mathbf h^H_k \mathbf F \mathbf w_k s_k}_{\text{Desired signal}} + \underbrace{\mathbf h^H_k \mathbf F \mathbf w_{k'} s_{k'}}_{\text{Interference}} +n_k,
\end{aligned}
\end{equation}
where $\mathbf{w}_k \in \mathbb{C}^{N_t \times \text{1}}$ denotes the digital transmit beamforming vector for the user $k \in \mathcal{K} $, $\mathbf F\in \mathbb C^{N_t \times N_{RF}}$ is the analog precoding matrix, and $n_k \sim \mathcal{C} \mathcal{N} (0,\sigma_k^2)$ is zero-mean additive white Gaussian noise (AWGN) with noise power $\sigma_k^2$. With this sub-connected structure, the analog precoder matrix $\mathbf F$ is given by
\begin{equation}
\begin{aligned}
    \mathbf F=\mathrm {Bdiag}(\mathbf f_1,\mathbf f_2,\ldots,\mathbf f_{N_{RF}})\in \mathbb C^{N_t \times N_{RF}},
\end{aligned}
\end{equation}
where $\mathbf f_i \in \mathbb C^{M \times 1},i=1,2,\dots,N_{RF}$, is an $M=N_t/N_{RF}$ dimensional vector with each element being constant modulus value, i.e, $|\mathbf f_i(j)|=1,j=1,\dots,M$.
Hence, the received data rate of user $k$ can be obtained as
\begin{equation}\label{Rk}
\begin{aligned}
    R_k=\mathrm {log}_2(1+ \text{SINR}_k),
\end{aligned}
\end{equation}
where 
\begin{equation}
\begin{aligned}
    \text{SINR}_k=\frac{|\mathbf h^H_k \mathbf {Fw}_k|}{\sum^{K+1}_{j=1,j\neq k}|\mathbf h^H_k \mathbf {Fw}_j|^2+\sigma_k^2}.
\end{aligned}
\end{equation}

As for sensing, the BS applies the receive beamformer, $\mathbf u \in \mathbb C^{N_r \times 1}$ to capture the reflected signal of target $s$. The received signal at the BS can be expressed as
\begin{equation}
    y_s= \underbrace{\zeta_s \mathbf u^\mathrm H \mathbf G_s \mathbf{F W s}}_{\text {Target reflection}}+\underbrace{\sum^C_{c=1} \zeta_c \mathbf u^\mathrm H \mathbf G_c \mathbf{F W s}}_{\text {Clutters reflection}}+n_s,
\end{equation}
where $n_s \sim \mathcal{C} \mathcal{N} (0,\sigma_s^2)$ denotes the AWGN for radar link. $\zeta_s$ and $\zeta_c $ are complex coefficients including the RCS of the target/clutters, respectively. Additionally, $\mathbf G_s$ and $\mathbf G_c$ denote the array response vectors between the BS and the target/clutters, respectively.

We assume point-like target and clutter, since the sizes of the target and the clutter are sufficiently small compared to the distances of the reflecting paths. Thus, the radar signal-to-clutter-plus-noise-ratio (SCNR) at the RP can be expressed as
\begin{equation}\label{SCNR}
    \text{SCNR} = \frac{||\zeta_s \mathbf u^\mathrm H \mathbf G_s \mathbf{F W}||^2}{||\sum^C_{c=1} \zeta_c \mathbf u^\mathrm H \mathbf G_c \mathbf{F W}||^2+\sigma^2_s}.
\end{equation}

For point target detection in MIMO radar systems, to quantify how much information can be extracted with a given sensing signal, we can use the sensing mutual information (MI) per unit time \cite{Ouyang2023}\cite{Peng2024}. It is defined as the MI between the received signal at the BS and the target response, conditioned on the transmitted signal. Specifically, the BS has knowledge of both the receive beamformer $\mathbf u$ and the transmitted ISAC signal $\mathbf x$, the conditional sensing MI between the received echo and the target response (or the sensing rate) as \cite{Lyu2025}
\begin{equation}\label{Rs}
    R_s = I(y_s;\zeta_s\mathbf G_s|\mathbf u,\mathbf x) =\text {log}_2(1+ \text {SCNR})
\end{equation}
where $\mathbf x = \mathbf {FWs}$ denotes the transmitted signal.

\subsection{Problem Formulation}
To balance the communication and sensing performance, we aim to maximize the weighted sum of communication rate and sensing MI in \eqref{Rk} and \eqref{Rs}. The optimization problem is formulated as 
\begin{subequations}\label{P_1}
\begin{align}
    \max_{{\mathbf{u}},{\mathbf{F}},{\mathbf{W}},{\Lambda  }}&~{\mathcal G({\mathbf{u}},{\mathbf{F}},{\mathbf{W}},{\mathbf \Lambda})=\varpi_c \sum^K_{k=1}R_k + \varpi_s R_s} \label{G}\\
    {\text{s.t.}}&~||\mathbf u||^2_2 = 1,\label{u constraint} \\ 
    &~||\mathbf{FW}||^2_F\leq P_{max},\label{P constraint}\\ 
    &~\mathbf F\in \mathcal A_F,\label{F constraint}
    \end{align}
\end{subequations}
where the weighting factors $\varpi _s$ and $\varpi _c$ control the priority of communication and sensing, and satisfy $\varpi _s + \varpi _c =1$. $\mathbf \Lambda =[\alpha^\mathrm t, \beta^\mathrm t, \gamma^\mathrm t,\alpha^\mathrm r, \beta^\mathrm r, \gamma^\mathrm r]$ represents the sets of rotation angles for the transmit and receive panels. 
Constraint \eqref{u constraint} normalizes the receive beamformer. Constraint \eqref{P constraint} denotes the transmit power should not exceed the maximum power limit. Constraint (\ref{F constraint}) corresponds to the sub-connected structure, i.e., the analog $\mathbf F$ belongs to a set of block matrices $\mathcal A_p$, where each block is an $M$ dimension vector with unit modulus elements.
Due to the non-concave/non-convex objective function/constraints, problem \eqref{P_1} is difficult to solve. 
Specifically, in contrast to the existing FPA-based hybrid beamforming designs, the position changes caused by the rotatability of transmitting and receiving antennas complicate the variable coupling between the numerators and denominators in the achievable rates of users and targets.
Moreover, since the resource allocation needs to meet the different demands of the two functions, the trade-off between communication and sensing needs to be investigated.

It is worth mentioning that, for simplicity, this paper focuses on a single target in this paper. Since the proposed algorithm can treat the echoes from other targets as radar interference in the denominator of (\ref{SCNR}), it can be similarly adapted to address the multi-target problem.
\section{Proposed Algorithm}
In this section, we propose an efficient algorithm for solving problem \eqref{P_1} by optimizing the transmit digital beamformer, the transmit analog beamformer, the receive beamformer, and the rotation angles of the transmitting and receiving panels. 
The problem is non-convex due to its objective function and coupling variables. To tackle this problem, an alternating optimization (AO) algorithm be proposed that updates each variable with fixed values of the other variables obtained from the last iteration. 

Given the mutual coupling between the digital beamformer and the analog beamformer in the objective and constraint \eqref{P constraint}, it is generally difficult to optimize these two parameters simultaneously. Moreover, the non-convex SINR, SCNR, and unit-modulus constraints make the problem even more challenging. A typical method is to first design the fully-digital beamformer and then approach the fully-digital one via hybrid beamformer design \cite{Yu2016}\cite{WangX20221}. However, the obtained solution based on such a two-stage method cannot guarantee the original constraints to be satisfied.
Therefore, we propose to directly optimize the digital beamformer and analog beamformer based on the principle of AO. 
\begin{figure*}[b] 
    \centering
    \hrulefill
    \begin{align}
        \tilde{\mathcal{G}}(\mathbf{F}, \mathbf{W}, \boldsymbol{\mu}, \boldsymbol{\xi}^c, \boldsymbol{\xi}^s ,\mathbf{\Lambda}) 
        &= \varpi_c \sum_{k=1}^K \text{log}(1 + \mu_k) + \varpi_s \text{log}(1 + \mu_{K+1}) - \varpi_c \sum_{k=1}^K \mu_k - \varpi_s \mu_{K+1} \notag \\
        & \quad + \varpi_c \sum_{k=1}^K \Bigl( 2\sqrt{1 + \mu_k} \mathrm{Re}\bigl\{ \xi_k^c \mathbf{h}_k^H(\mathbf{\Lambda}) \mathbf F\mathbf{w}_k \bigr\} 
        - \lvert \xi_k^c \rvert^2 \Bigl( \sum_{j=1}^{K+1} \lvert \mathbf{h}_k^H(\mathbf{\Lambda}) \mathbf F\mathbf{w}_j \rvert^2 + \sigma_k^2 \Bigr) \Bigr) \notag \\
        & \quad + \varpi_s \Bigl( 2\sqrt{1 + \mu_{K+1}} \mathrm{Re}\bigl\{ \zeta _s \mathbf u^H \mathbf{G}_s(\mathbf{\Lambda}) \mathbf{F} \mathbf W \boldsymbol{\xi}^s \bigr\} \notag \\  
        & \qquad - \lVert \boldsymbol{\xi}^s \rVert^2 \Bigl( \sum_{c=1}^C \lVert \zeta_c \mathbf u^H \mathbf{G}_c(\mathbf{\Lambda}) \mathbf{F} \mathbf{W}\rVert^2 
        + \lVert \zeta_s \mathbf u^H \mathbf{G}_s(\mathbf{\Lambda}) \mathbf{F} \mathbf{W}\rVert^2 + \sigma_s^2 \Bigr) \Bigr)  
        \label{FP}
    \end{align}
\end{figure*}
For the variables $\mathbf F$, $\mathbf W$, $\mathbf \Lambda$, the fractional programming (FP) approach is employed\cite{ShenK2018}. We introduce auxiliary variables $\boldsymbol {\mu} =[ \mu_1,\dots,\mu_{K+1}]$, $\boldsymbol {\xi}^c =[ \xi^c_1,\dots, \xi^c_K]^T$ and $\boldsymbol {\xi}^s =[ \xi^s_1,\dots, \xi^s_{K+1}]^T$  to transform \eqref{G} into an equivalent convex form \eqref{FP}. Based on the AO framework, the problem \eqref{P_1} can be decomposed as five sub-problems. The details of the proposed algorithms are presented below.

\subsection{Receive Beamforming Optimization}
Given $\{\mathbf F,\mathbf W, \mathbf \Lambda \}$, the optimization of $\mathbf u $ only affects the SCNR. Therefore, maximizing the WSR, i.e., objective value \eqref{G}, is equivalent to maximizing SCNR. 
Hence, the subproblem for $\mathbf u$ is formulated by 
\begin{subequations}\label{u_optimize}
    \begin{align}
    {\mathcal{SP}}_1:&~\max_{{\mathbf{u}}}~\mathrm {SCNR} \label{P u}\\
    {\text{s.t.}}&~\eqref{u constraint}  .\nonumber
    \end{align}
\end{subequations}

Based on the expression in \eqref{u_optimize}, the maximization of SCNR belongs to the problem of generalized Rayleigh quotient. By applying \textbf{Proposition \ref{Proposition:1}}, the optimal  closed-form solution can be obtained.
\begin{proposition}\label{Proposition:1}
The optimal solutions of problem \eqref{u_optimize} is given by   
\begin{equation}
    \mathbf{u}^* = \frac{\left( \sum^{C}_{c=1} \tilde{\mathbf{f}}_c \tilde{\mathbf{f}}_c^H + \sigma_{\text{s}}^2 \mathbf{I}_{N_r} \right)^{-1} \mathbf{b}_s}{\left\| \left( \sum^{C}_{c=1} \tilde{\mathbf{f}}_c \tilde{\mathbf{f}}_c^H +\sigma_{\text{s}}^2 \mathbf{I}_{N_r} \right)^{-1} \mathbf{b}_s \right\|_2},
    \label{eq:u_optimal}
\end{equation}
where $\mathbf{b}_s = \mathbf{G}_s ( \sum^{K+1}_{k=1} \mathbf F \mathbf{w}_k ) \in \mathbb{C}^{N_r \times 1}$ and $\tilde{\mathbf{f}}_c = \zeta _c \mathbf G_c \mathbf{FW} \in \mathbb{C}^{Nr \times (K+1)}$.
\end{proposition}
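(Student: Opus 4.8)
The plan is to recognize $\mathcal{SP}_1$ as a generalized Rayleigh quotient in $\mathbf{u}$ and solve it in closed form. First I would rewrite the SCNR in \eqref{SCNR} as an explicit quadratic ratio. Using $\mathbb{E}\{\mathbf{s}\mathbf{s}^H\}=\mathbf{I}_{K+1}$ together with the statistics of the RCS coefficients, the clutter-plus-noise term in the denominator collapses to $\mathbf{u}^H\mathbf{R}\mathbf{u}$ with $\mathbf{R}=\sum_{c=1}^{C}\tilde{\mathbf{f}}_c\tilde{\mathbf{f}}_c^H+\sigma_s^2\mathbf{I}_{N_r}$, while the target return in the numerator is cast as the rank-one form $|\mathbf{b}_s^H\mathbf{u}|^2=\mathbf{u}^H\mathbf{b}_s\mathbf{b}_s^H\mathbf{u}$, with $\mathbf{b}_s=\mathbf{G}_s\bigl(\sum_{k=1}^{K+1}\mathbf{F}\mathbf{w}_k\bigr)$. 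This yields
\begin{equation}
    \mathrm{SCNR}=\frac{\mathbf{u}^H\mathbf{b}_s\mathbf{b}_s^H\mathbf{u}}{\mathbf{u}^H\mathbf{R}\mathbf{u}}.
\end{equation}

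Next I would record two structural facts. The matrix $\mathbf{R}$ is Hermitian and, by virtue of the $\sigma_s^2\mathbf{I}_{N_r}$ term, positive definite; hence it is invertible and admits a positive-definite square root $\mathbf{R}^{1/2}$. Moreover, the quotient is invariant under $\mathbf{u}\mapsto\kappa\mathbf{u}$ for any scalar $\kappa\neq0$, since numerator and denominator are both homogeneous of degree two in $\mathbf{u}$. The equality constraint \eqref{u constraint} therefore does not affect the optimal \emph{direction} of $\mathbf{u}$, only its scale; I would thus maximize the unconstrained ratio and normalize at the end.

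The core step is the maximization. With the whitening substitution $\mathbf{v}=\mathbf{R}^{1/2}\mathbf{u}$, i.e.\ $\mathbf{u}=\mathbf{R}^{-1/2}\mathbf{v}$, the quotient becomes
\begin{equation}
    \mathrm{SCNR}=\frac{\bigl|(\mathbf{R}^{-1/2}\mathbf{b}_s)^H\mathbf{v}\bigr|^2}{\mathbf{v}^H\mathbf{v}},
\end{equation}
which by the Cauchy--Schwarz inequality is maximized exactly when $\mathbf{v}$ is parallel to $\mathbf{R}^{-1/2}\mathbf{b}_s$, attaining the value $\mathbf{b}_s^H\mathbf{R}^{-1}\mathbf{b}_s$. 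Back-substituting gives $\mathbf{u}\propto\mathbf{R}^{-1/2}\mathbf{v}\propto\mathbf{R}^{-1}\mathbf{b}_s$, and imposing $\|\mathbf{u}\|_2=1$ produces precisely \eqref{eq:u_optimal}. Equivalently, one may argue that the maximizer of a ratio of quadratic forms is the principal generalized eigenvector of the pencil $(\mathbf{b}_s\mathbf{b}_s^H,\mathbf{R})$, which for the rank-one numerator reduces to $\mathbf{R}^{-1}\mathbf{b}_s$.

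I expect the main obstacle to lie not in the optimization---the Rayleigh-quotient argument is routine once the two quadratic forms are in place---but in the preceding algebraic reduction of \eqref{SCNR}. In particular, collapsing the target term to the rank-one matrix $\mathbf{b}_s\mathbf{b}_s^H$ (so that the maximizer is the single vector $\mathbf{R}^{-1}\mathbf{b}_s$ rather than a generic generalized eigenvector of a higher-rank numerator) is the delicate modeling step, and the reduction of the clutter denominator to $\sum_{c=1}^{C}\tilde{\mathbf{f}}_c\tilde{\mathbf{f}}_c^H$ similarly hinges on the assumed signal and RCS statistics. Getting these identifications right is what makes the clean closed form \eqref{eq:u_optimal} available.
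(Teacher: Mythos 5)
Your overall route is the same as the paper's: recognize $\mathcal{SP}_1$ as a generalized Rayleigh quotient with a rank-one numerator and read off the MMSE-type maximizer $\mathbf{u}^*\propto\mathbf{R}^{-1}\mathbf{b}_s$. The only difference in the optimization itself is that you carry out the whitening-plus-Cauchy--Schwarz argument explicitly, whereas the paper simply cites the generalized Rayleigh quotient result from Golub and Van Loan; both are fine.

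There is, however, one concrete gap in your reduction, and it sits exactly at the step you yourself flag as ``delicate.'' The numerator of \eqref{SCNR} is
\begin{equation}
\lVert\zeta_s\mathbf{u}^H\mathbf{G}_s\mathbf{F}\mathbf{W}\rVert^2
=|\zeta_s|^2\,\mathbf{u}^H\mathbf{G}_s\bigl(\mathbf{F}\mathbf{W}\mathbf{W}^H\mathbf{F}^H\bigr)\mathbf{G}_s^H\mathbf{u}
=|\zeta_s|^2\sum_{k=1}^{K+1}\bigl|\mathbf{u}^H\mathbf{G}_s\mathbf{F}\mathbf{w}_k\bigr|^2,
\end{equation}
and this is \emph{not} equal to $|\mathbf{b}_s^H\mathbf{u}|^2=\bigl|\sum_{k}\mathbf{u}^H\mathbf{G}_s\mathbf{F}\mathbf{w}_k\bigr|^2$ in general (sum of squared moduli versus squared modulus of the sum). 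Your claim that the numerator ``is cast as'' $\mathbf{u}^H\mathbf{b}_s\mathbf{b}_s^H\mathbf{u}$ therefore needs justification, and the justification is precisely the first step of the paper's proof: since $\mathbf{G}_s=\beta_s\mathbf{g}_s\mathbf{h}_s^H$ is rank one, one has $\mathbf{u}^H\mathbf{G}_s\mathbf{F}\mathbf{w}_k=\beta_s(\mathbf{u}^H\mathbf{g}_s)(\mathbf{h}_s^H\mathbf{F}\mathbf{w}_k)$, so both the true numerator and $|\mathbf{b}_s^H\mathbf{u}|^2$ equal a nonnegative, $\mathbf{u}$-independent scalar times $|\mathbf{u}^H\mathbf{g}_s|^2$, and $\mathbf{b}_s\propto\mathbf{g}_s$. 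Hence the two objectives share the same maximizing direction even though they are not numerically equal, and the stated $\mathbf{u}^*$ follows. Without the rank-one structure of $\mathbf{G}_s$ this replacement would change the optimizer (the principal generalized eigenvector of a higher-rank pencil need not be $\mathbf{R}^{-1}\mathbf{b}_s$), so you should make this factorization explicit rather than leave it as a modeling caveat. Your treatment of the denominator (independent clutter RCS so that the cross terms vanish in expectation, plus positive definiteness of $\mathbf{R}$ from the $\sigma_s^2\mathbf{I}_{N_r}$ term) is consistent with the form the paper actually uses in \eqref{FP}.
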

\begin{proof}
    Rewrite the numerator part of Formula (20) as
\begin{equation}
\begin{split}
|\zeta_s|^2 \mathbf{u}^H &\mathbf{G}_s \tilde {\mathbf A} \mathbf{G}^H_s \mathbf{u} \\
&= |\zeta_s|^2 \mathbf{u}^H \mathbf{g}_s \mathbf{h}_s^H \tilde {\mathbf A} \mathbf{h}_s\mathbf{g}_s^H \mathbf{u} \\
&= |\zeta_s|^2 \mathbf{h}_s^H \tilde {\mathbf A} \mathbf{h}_s \mathbf{u}^H \mathbf{g}_s \mathbf{g}_s^H \mathbf{u}.
\end{split}
\end{equation}

Since $|\zeta_s|^2 \mathbf{h}_s^H \tilde {\mathbf A} \mathbf{h}_s$ is non-negative and independent of the receive beamformer $\mathbf u$, based on the research results in Reference \cite{GH1996}, the generalized Rayleigh quotient is applied to maximize problem \eqref{u_optimize} and arrive at the optimal solution $\mathbf u^*$, which is a minimal mean-square error (MMSE) filter.
\end{proof}
    
\subsection{Transmit Digital Beamforming Optimization}
Given $\{\mathbf u,\mathbf F, \mathbf \Lambda , \boldsymbol\mu, \boldsymbol \xi^s, \boldsymbol \xi^c\}$, we optimize the digital beamformer $\mathbf W$. To decouple the product of $\mathbf F$ and $\mathbf W$ in the original power constraint \eqref{P constraint}, we reformulate the power constraint in accordance with the sub-connected structure as follows:
\begin{equation}
    ||\mathbf{FW}||^2_F=\frac{N_t}{N_{RF}}||\mathbf W||^2_F=M||\mathbf W||^2_F.    
\end{equation}
Thus, the subproblem of transmit digital beamformer design is formulated as
\begin{subequations}\label{P_W}
\begin{align}
    \max_{{\mathbf{W}}}&~{\tilde{\mathcal G}({\mathbf{u}},{\mathbf{F}},{\mathbf \Lambda},\boldsymbol\mu, \boldsymbol \xi^s, \boldsymbol \xi^c)}\\
    {\text{s.t.}}&~||\mathbf{W}||^2_F\leq \frac{P_{max}}{M}.
\end{align}
\end{subequations}

Since $\tilde{\mathcal G}$ is a convex function with respect to $\mathbf W$, the Lagrange dual method can be employed to obtain the closed-form expression of $\mathbf W$. The Lagrangian function is defined as 
\begin{equation}
\begin{split}
\mathcal{L}({\mathbf W,\lambda})&=-\tilde{\mathcal{G}}(\mathbf W |\mathbf{u},\mathbf{F},\mathbf \Lambda,\boldsymbol\mu, \boldsymbol \xi^s, \boldsymbol \xi^c) \\
&+ \lambda(||\mathbf W||^2_F-\frac{P_{max}}{M}),
\end{split}
\end{equation}
where $\lambda \geq  \text{0}$ is the Lagrange multiplier corresponding to the power constraint. The KKT conditions are used to solve the dual problem as

\begin{align}
\frac{\partial \mathcal{L}(\mathbf{W},\lambda)}{\partial \mathbf{W}} &= \mathbf{0}, \label{kkt}\\
\|\mathbf{W}\|_F^2 - \frac{P_{\text{max}}}{M} &\leq 0, \label{kkt1}\\
\lambda &\geq 0, \label{kkt2}\\
\lambda\left( \|\mathbf{W}\|_F^2 - \frac{P_{\text{max}}}{M} \right) &= 0. \label{kkt3}
\end{align}

The closed-form expression of $\mathbf W$ can be obtained by solving \eqref{kkt}, which is
\begin{equation}
\mathbf{w}_k(\lambda) = \left( \left( \boldsymbol{\Lambda}_k^T + \lambda \mathbf{I}_{N_{\text{RF}}} \right)^{-1} \right)^* \boldsymbol{\varphi}_k, \quad \forall k \in \{1, \dots, K+1\},
\end{equation}
where
\begin{align}
\boldsymbol{\Lambda}_k &= \varpi_c \sum_{k=1}^K |\xi_k^c|^2 (\mathbf{h}_k^H \mathbf{F})^H (\mathbf{h}_k^H \mathbf{F}) \notag \\
&+ \varpi_s \|\boldsymbol{\xi}^s\|^2 \biggl\{ 
\lvert \zeta_s \rvert^2 ( \mathbf{u}^H \mathbf{G}_s \mathbf{F} )^H 
( \mathbf{u}^H \mathbf{G}_s \mathbf{F} ) \notag \\
&\quad + \sum_{c=1}^{C} \lvert \zeta_c \rvert^2 ( \mathbf{u}^H \mathbf{G}_c \mathbf{F} )^H 
( \mathbf{u}^H \mathbf{G}_c \mathbf{F} )
\biggr\}, \\
\boldsymbol{\varphi}_k &= \varpi_c \sqrt{1 + \mu_k} \, (\xi_k^{c} \mathbf{h}_k^H \mathbf F)^H + \varpi_s \sqrt{1 + \mu_{K+1}} \notag \\
&\quad \times (\zeta_s \xi_k^{s} \mathbf u^H \mathbf{G}_s \mathbf F)^H, \quad \text{for } k \in \{1, \dots, K\}, \\
\boldsymbol{\varphi}_{K+1} &= \varpi_s \sqrt{1 + \mu_{K+1}}(\zeta_s \xi_{K+1}^{s} \mathbf u^H \mathbf{G}_s \mathbf{F})^H.
\end{align}

We employ a bisection method to select an appropriate dual variable $\lambda$ to satisfy the complementary slackness condition \eqref{kkt3}. If the peimal feasibility satisfied $||\mathbf W^*||_F^2 \leq \frac{P_{max}}{M}$, $\lambda = \text{0}$. Otherwise, $\lambda$ needs to be decided to satisfy

\begin{equation}
    \mathbf h(\lambda)=||\mathbf W(\lambda)||_F^2-\frac{P_{max}}{M}\leq \varepsilon .
 \end{equation}

The bisection method can be adopted to find the solution of $\lambda$ \cite{PanC2020}, which is summarized in $\textbf{Algorithm}$ \ref{alg:bisection}.
\begin{algorithm}[!t]
\caption{Bisection Method for Searching Dual Variable $\lambda$}
\label{alg:bisection}
\begin{algorithmic}[1]
\State \textbf{Initialize} upper and lower bound $\lambda_{\max}$, $\lambda_{\min}$, tolerance $\varepsilon$ and iteration index $l = 0$.
\Repeat
\State Compute $\lambda^{(l)} = (\lambda_{\min} + \lambda_{\max}) / 2$.
\State Replace $\lambda^{(l)}$ in $\mathbf{W}(\lambda)$ and compute $h(\lambda^{(l)})$.
\If{$h(\lambda^{(l)}) > 0$}
    \State Set $\lambda_{\min} = \lambda^{(l)}$.
\Else
    \State Set $\lambda_{\max} = \lambda^{(l)}$.
\EndIf
\State Set iteration index $l = l + 1$.
\Until{$|h(\lambda^{(l)})| \leq \varepsilon$}
\State \textbf{Output}: optimal dual variable $\lambda^\star$.
\end{algorithmic}
\end{algorithm}

\subsection{Transmit Analog Beamforming Optimization}
With given $\{\mathbf u,\mathbf W, \mathbf \Lambda , \boldsymbol\mu, \boldsymbol \xi^s, \boldsymbol \xi^c\}$, we optimize the analog beamformer $\mathbf F$. The subproblem of analog beamformer design is formulated as
\begin{subequations}\label{P_F}
\begin{align}
    \max_{{\mathbf{F}}}&~{\tilde{\mathcal G}({\mathbf{u}},{\mathbf{W}},{\mathbf \Lambda},\boldsymbol\mu, \boldsymbol \xi^s, \boldsymbol \xi^c)}\\
    {\text{s.t.}}&~\eqref{F constraint}.\notag
\end{align}
\end{subequations}

Note that the analog beamformer $\mathbf F$ is block diagonal, i.e., most of its elements are zeros. To reduce the computational complexity, we extract the non-zero elements of $\mathbf F$ as
\begin{equation}
    \mathbf d=[\mathbf f^T_1,\dots,\mathbf f^T_{N_{RF}}]^T \in \mathbb C^{N_t \times 1}.
 \end{equation}

The singular value decomposition (SVD) of $\mathbf W$ with a rank of $R$ is 
\begin{equation}
\mathbf{W} = \sum_{r=1}^R \rho_r \mathbf{u}_r \mathbf{v}^H_r,
 \end{equation}
where $\rho_r$, $\mathbf{u}_r$ and $\mathbf{v}_r$ are the $r$-th eigenvalue ,left and right singular matrix, respectively. Define $\tilde{\mathbf{u}}_r=\sqrt{\rho_r}\mathbf{u}_r$, $\tilde {\mathbf{v}}_r=\sqrt{\rho_r}\mathbf{v}_r$ and $\tilde V_j = \tilde v_j^H \tilde v_j$. Then by applying the matrix transformation, the problem \eqref{P_F} can be recast as
\begin{subequations}\label{F_optimize}
\begin{align}
\max_{\mathbf{d}} \quad &\,\tilde{\mathcal P} =\varpi_c \sum_{k=1}^{K} \Biggl( 
2\Re \bigl\{ \tilde{\boldsymbol{\beta}}_k^H \mathbf{d} \bigr\} 
- |\xi_k^c|^2 \sum_{j=1}^{K+1} \bigl\lvert \tilde{\mathbf{h}}_{k,j}^H \mathbf{d} \bigr\rvert^2 
\Biggr) \notag \\
&+ \varpi_s \Biggl( 
2\Re \bigl\{ \tilde{\boldsymbol{\beta}}_s^H \mathbf{d}  \bigr\} 
- \|\boldsymbol{\xi}^s\|^2 \biggl( \sum_{c=1}^{C} \bigl\| \sum_{r=1}^{R} \mathbf{u}^H \tilde{\mathbf{G}}_{c,r} \mathbf{d} \tilde{\mathbf{v}}^H_r \bigr\|^2 \notag \\
&\quad + \bigl\| \sum_{r=1}^{R} \mathbf{u}^H \tilde{\mathbf{G}}_{s,r} \mathbf{d} \tilde{\mathbf{v}}^H_r \bigr\|^2 \biggr) 
\Biggr) + \text{const}(\boldsymbol{\mu}) \label{eq:F_optimize}\\
\text{s.t.} \quad &\bigl\lvert [\mathbf{d}]_n \bigr\rvert = 1, \quad \forall\, n = 1, 2, \dots, N_t, \label{d_constraint}
\end{align}
\end{subequations}
where const($\boldsymbol \mu$) is a constant term when $\boldsymbol \mu$ is fixed, $\tilde{\mathbf{h}}_{k,j}^H = \mathbf{h}_k^H ( \text{diag}(\mathbf{w}_{j}) \otimes \mathbf{I}_M )$, 
$\tilde{\boldsymbol{\beta}}_k^H = \sqrt{1+\mu_k} \, \xi_k^{c*} \, \tilde{\mathbf{h}}_{k,k}^H$,
$\tilde{\boldsymbol{\beta}}_s^H = \sqrt{1+\mu_{K+1}}\zeta_s^* \mathbf u^H \mathbf G_s (\text{diag}(\mathbf W \boldsymbol \xi_s)\otimes \mathbf I_M)$ and $\tilde{\mathbf{G}}_{c,r} = \zeta_c \mathbf G_c(\text{diag}(\tilde {\mathbf u}_r)\otimes \mathbf I_M)$.
Problem \eqref{F_optimize} is subject to the unit modulus constraint \eqref{d_constraint}. A conventional approach is to adopt Riemannian manifold optimization and solve it using the conjugate gradient descent method. However, this method suffers from low efficiency and unstable algorithm performance.
To address the problem \eqref{F_optimize}, the continuous variable $\phi$ is introduced. Thus the subproblem \eqref{eq:F_optimize} is formulated as 
\begin{equation}
\max~{\tilde {\mathcal P} -\eta||\boldsymbol\phi-\mathbf d||^2_2}, 
\end{equation}
where $\eta>0$ is the penalty parameter. Based on the penalty method, problem \eqref{F_optimize} can be solved via updating $\mathbf d$ and $\boldsymbol \phi$ iteratively.
With given $\mathbf d$, the optimal solution of $\boldsymbol \phi$ is
\begin{equation}
    \boldsymbol \phi^*= [ \tilde{\boldsymbol{\Xi}}]^{-1} \tilde {\boldsymbol \beta},
\end{equation}
where $\tilde{\boldsymbol{\Xi}} = \varpi_c \sum_{k=1}^K |\xi_k^c|^2 \sum_{j=1}^{K+1} \bigl( \tilde{\mathbf{h}}_{k,j} \tilde{\mathbf{h}}_{k,j}^H \bigr) 
+ \varpi_s \|\boldsymbol{\xi}^s\|^2 ( \sum_{c=1}^C \sum_{r=1}^R \tilde{V}_j \tilde{\mathbf{G}}_{c,r}^H \tilde{\mathbf{G}}_{c,r} 
+ \sum_{r=1}^R \tilde V_j \tilde{\mathbf{G}}_{s,r}^H \tilde{\mathbf{G}}_{s,r} ) + \eta \mathbf{I}_M$, $\tilde{\boldsymbol{\beta}}= \varpi _c\sum^K_{k=1}\tilde{\boldsymbol{\beta}}_k+\varpi _s\tilde{\boldsymbol{\beta}}_s+\eta \mathbf d$. With given $\boldsymbol \phi$, the optimization of $\mathbf d$ can be obtained from \eqref{F_optimize} as
\begin{subequations}\label{Phi_optimize}
\begin{align}
    \min_{{\mathbf{d}}}&~{||\boldsymbol \phi - \mathbf d||^2_2}\\
    {\text{s.t.}}&~\eqref{d_constraint}.\notag
\end{align}
\end{subequations}
The optimal phase of $\mathbf d$ is given by
\begin{equation}
    \text{arg}\{\mathbf d\}= \text{arg}\{\boldsymbol \phi\}.
\end{equation}

Finally, by performing a simple matrix transformation on $\mathbf d$, the optimized variable $\mathbf F$ can be obtained.

\subsection{Auxiliary Variables Optimization}

Since $\tilde {\mathcal G}(\boldsymbol\xi_c,\boldsymbol\xi_s , \boldsymbol \mu| \mathbf u, \mathbf F, \mathbf W, \boldsymbol \Lambda)$ is concave w.r.t. auxiliary variables $\boldsymbol\xi_c$ , $\boldsymbol\xi_s$ for quadratic transform  and auxiliary variable $\boldsymbol \mu$ for lagrangian dual transform. 
With given fixed other parameters, we can obtained the closed-form solutions of $\boldsymbol\xi_c$ , $\boldsymbol\xi_s$ and $\boldsymbol \mu$ by setting the patio derivatives to zeros, i.e.,$\frac{\partial \tilde{\mathcal G}(\boldsymbol\xi_c| \mathbf u, \mathbf F, \mathbf W, \boldsymbol \Lambda, \boldsymbol\xi_s , \boldsymbol \mu)}{\partial \boldsymbol\xi_c}=0$, $\frac{\partial \tilde{\mathcal G}(\boldsymbol\xi_s| \mathbf u, \mathbf F, \mathbf W, \boldsymbol \Lambda, \boldsymbol\xi_c , \boldsymbol \mu)}{\partial \boldsymbol\xi_s}=0$ and $\frac{\partial \tilde{\mathcal G}(\boldsymbol\mu| \mathbf u, \mathbf F, \mathbf W, \boldsymbol \Lambda, \boldsymbol\xi_c , \boldsymbol \xi_s)}{\partial \boldsymbol\mu}=0$, respectively.
Thus, we can derive the closed-form solutions for the auxiliary variables $\boldsymbol \xi_c$ and $\boldsymbol \xi_s$ related to the quadratic transformation as follows
\begin{equation}\label{xic}
    \xi_k^c=\frac{\sqrt{1+\mu_k}(\mathbf h_k^H\mathbf F \mathbf w_k)^H}{\sum^{K+1}_j|\mathbf h_k^H \mathbf F \mathbf w_j|^2+\sigma_k^2},\forall k={1,\dots,K},
\end{equation}
\begin{equation}\label{xis}
    \boldsymbol \xi^s=\frac{\sqrt{1+\mu_{K+1}}(\zeta_s \mathbf u^H \mathbf G_s \mathbf F \mathbf W)^H}{\sum^{C}_{c=1}||\zeta_c \mathbf u^H \mathbf G_c \mathbf F \mathbf W||^2+||\zeta_s \mathbf u^H \mathbf G_s \mathbf F \mathbf W||^2+\sigma_s^2}.
\end{equation}

Similarly, the closed-form solution for the auxiliary variable $\boldsymbol \mu$ related to the lagrangian dual transform, which is given by
\begin{equation}\label{muk}
    \mu_k=\frac{|\mathbf h_k^H\mathbf F \mathbf w_k|^2}{\sum^{K+1}_{j\neq k}|\mathbf h_k^H \mathbf F \mathbf w_j|^2+\sigma_k^2}, \forall k={1,\dots,K},
\end{equation}
\begin{equation}\label{muK+1}
\begin{aligned}
    \mu_{K+1}=\frac{||\zeta_s \mathbf u^H \mathbf G_s \mathbf F \mathbf W||^2}{\sum^{C}_{c=1}||\zeta_s \mathbf u^H \mathbf G_c \mathbf F \mathbf W||^2+\sigma_s^2}.
\end{aligned}
\end{equation}

\subsection{Antenna Rotation Optimization}
In this subsection, we optimize the antenna rotation angle $\boldsymbol \Lambda$ with any given other variables. $\mathbf \Lambda = [\alpha^\mathrm t, \beta^\mathrm t, \gamma^\mathrm t,\alpha^\mathrm r, \beta^\mathrm r, \gamma^\mathrm r]$ is the set of rotation angles of the transmitting and receiving panels. To optimize  $\mathbf \Lambda$, a feasible approach is to adopt an alternating optimization strategy, optimizing each element one by one until all elements in $\mathbf \Lambda$ are optimized. The transmit antenna rotation optimization and the receive antenna rotation optimization have similar mathematical formulations. For simplicity, this paper focuses on the optimization of the transmitting antenna rotation, and let $\mathbf \Lambda$ contain only one variable among $\alpha_t$, $\beta_t$ and $\gamma_t$, with this variable denoted as $\tilde \Lambda$. The subproblem of antenna rotation design is formulated as
\begin{equation}\label{P_R}
    \max_{{\tilde {\Lambda}}} \quad \tilde{\mathcal G}({\mathbf{u}},{\mathbf{W}},{\mathbf{F}},\boldsymbol\mu, \boldsymbol \xi^s, \boldsymbol \xi^c).
\end{equation}

Problem \eqref{P_R} is difficult to solve directly because of its non-convexity. Here, $\tilde{\mathcal G}({\mathbf{u}},{\mathbf{W}},{\mathbf{F}},\boldsymbol\mu, \boldsymbol \xi^s, \boldsymbol \xi^c)$ is replaced by $\tilde{\mathcal G}$ for simplicity. It is noted that $\tilde{\mathcal G}$ is differentiable across the entire feasible region, so the gradient ascent (GA) method be utilized to find a local stationary point efficiently.
Derivative $\frac{\partial \tilde{\mathcal G}}{\partial \tilde \Lambda}$ can be rewritten as
\begin{equation}
\begin{aligned}
&\frac{\partial \tilde{\mathcal{G}}(\tilde \Lambda)}{\partial \tilde \Lambda} =\varpi_c \sum_{k=1}^K \sqrt{1 + \mu_k} \frac{\partial \mathcal{F}_{1,k}(\tilde \Lambda)}{\partial \tilde \Lambda} \\
&\quad - \varpi_c \sum_{k=1}^K \vert \xi_k^c \vert^2 \sum_{j=1}^{K+1} \frac{\partial \mathcal{F}_{2,k,j}(\tilde \Lambda)}{\partial \tilde \Lambda} + \varpi_s \sqrt{1 + \mu_{K+1}} \frac{\partial \mathcal{F}_3(\tilde \Lambda)}{\partial \tilde \Lambda} \\
&\quad - \varpi_s \Vert \boldsymbol{\xi}^s \Vert^2 \sum_{c=1}^C \frac{\partial \mathcal{F}_{4,c}(\tilde \Lambda)}{\partial \tilde \Lambda} - \varpi_s \Vert \boldsymbol{\xi}^s \Vert^2 \frac{\partial \mathcal{F}_5(\tilde \Lambda)}{\partial \tilde \Lambda},
\end{aligned}
\label{eq:R_derive}
\end{equation}
where
\begin{equation}
\mathcal{F}_{1,k}(\tilde \Lambda) = 2\Re\!\left\{ \xi_k^c \mathbf{h}_k^H(\tilde \Lambda) \mathbf F \mathbf{w}_k \right\},
\label{eq:F1k}
\end{equation}
\begin{equation}
\mathcal{F}_{2,k,j}(\tilde \Lambda) = \vert \mathbf{h}_k^H(\tilde \Lambda) \mathbf F \mathbf{w}_j \vert^2,
\label{eq:F2kj}
\end{equation}
\begin{equation}
\mathcal{F}_3(\tilde \Lambda) = 2\Re\!\left\{ \zeta_s \mathbf u^H \mathbf{G}_s(\tilde \Lambda) \mathbf{F} \mathbf{W} \boldsymbol{\xi}^s \right\},
\label{eq:F3}
\end{equation}
\begin{equation}
\mathcal{F}_{4,c}(\tilde \Lambda) = \Vert \zeta_c \mathbf u^H\mathbf{G}_c(\tilde \Lambda) \mathbf F \mathbf{W} \Vert^2,
\label{eq:F4c}
\end{equation}
\begin{equation}
\mathcal{F}_5(\tilde \Lambda) = \Vert \zeta_s \mathbf u^H \mathbf{G}_s(\tilde \Lambda) \mathbf{F} \mathbf W\Vert^2.
\label{eq:F5}
\end{equation}

Define $\mathbf D_{k,n_t}= \mathbf P_k - \mathbf t_{n_t}$. $\bar{\mathbf R}_v$ denotes the derivative of the rotation matrix $\mathbf R$ with respect $v$, where $v \in \{\alpha, \beta, \gamma\}$. 
The partial derivatives w.r.t. $\tilde \Lambda$ are derived as \eqref{eq:hk_derive}--\eqref{eq:F5_derive}. 
\begin{figure*}[b] 
    \centering
    \hrulefill
\begin{align}
    &\frac{\partial [\mathbf{h}_k(\tilde{\Lambda})]_{n_t}}{\partial \tilde{\Lambda}} = \frac{A\left( (-\mathbf{q}^t_{n_t} \bar{\mathbf{R}}^T_v \mathbf{u}_0 + \mathbf{D}_{k,n_t} \bar{\mathbf{R}}_v \mathbf{u}_0 ) \|\mathbf{D}_{k,n_t}\|^2 + \mathbf{D}_{k,n_t}^T \mathbf{u}^\mathrm{t} \mathbf{D}_{k,n_t} \bar{\mathbf{R}}_v \mathbf{q}_{n_t} \right)}{2 \mathbf [h_k]_{n_t}||\mathbf D_{k,n_t}||^3} [\mathbf{a}(\boldsymbol{\varpi}^t)]_{n_t} \notag \\
    &\quad \quad \quad\quad\quad+ [h_k]_{n_t} \left( -j\frac{2\pi}{\lambda} \mathbf{P}^T_k \bar{\mathbf{R}}_v \mathbf{q}_{n_t} \right) [\mathbf{a}(\boldsymbol{\varpi}^t)]_{n_t},
    \label{eq:hk_derive} \\[5pt] 
    &\frac{\partial \mathcal{F}_{1,k}(\tilde \Lambda)}{\partial \tilde \Lambda} = 2\Re\left\{\xi_k^c (\frac{\partial \mathbf h_k}{\partial \tilde \Lambda})^H \mathbf F \mathbf w_k \right\},\label{eq:f1_derive}\\[5pt]
    &\frac{\partial \mathcal{F}_{2,k,j}(\tilde \Lambda)}{\partial \tilde \Lambda} = (\frac{\partial \mathbf h_k}{\partial \tilde \Lambda})^H \tilde {\mathbf A}_j \mathbf h_k + \mathbf h_k^H \tilde {\mathbf A}_j (\frac{\partial \mathbf h_k}{\partial \tilde \Lambda}), \text{ where }\tilde {\mathbf A}_j= \mathbf F {\mathbf w}_j{\mathbf w}^H_j \mathbf F^H,\\[5pt]
    &\frac{\partial \mathcal{F}_3(\tilde \Lambda)}{\partial \tilde \Lambda} = 2\Re\left\{ \mathbf u^H (\frac{\partial \mathbf G_s}{\partial \tilde \Lambda})\mathbf F \mathbf W \boldsymbol \xi^s \right\},\\[5pt]
    &\frac{\partial \mathcal{F}_{4,c}(\tilde{\Lambda})}{\partial \tilde{\Lambda}} = |\zeta_c|^2 \left( \mathbf{u}^H ( \frac{\partial \mathbf{G}_c}{\partial \tilde{\Lambda}}) \tilde {\mathbf A} \mathbf G_c \mathbf u +\mathbf u^H \mathbf G_c \tilde {\mathbf A} ( \frac{\partial \mathbf{G}_c}{\partial \tilde{\Lambda}})^H \mathbf u \right) , \text{ where } \frac{\partial \mathbf{G}_c}{\partial \tilde{\Lambda}}= \beta_c \mathbf g_c(\frac{\partial \mathbf h_c}{\partial \tilde \Lambda})^H,\\[5pt]
    &\frac{\partial \mathcal{F}_5(\tilde{\Lambda})}{\partial \tilde{\Lambda}} = |\zeta_s|^2 \left( \mathbf{u}^H ( \frac{\partial \mathbf{G}_s}{\partial \tilde{\Lambda}}) \tilde {\mathbf A} \mathbf G_s \mathbf u +\mathbf u^H \mathbf G_s \tilde {\mathbf A} ( \frac{\partial \mathbf{G}_s}{\partial \tilde{\Lambda}})^H \mathbf u \right) , \text{ where }\tilde {\mathbf A}= \mathbf F {\mathbf W}{\mathbf W}^H \mathbf F^H,\label{eq:F5_derive}
\end{align}
\end{figure*}

However, the performance of the GA is highly dependent on the step size, i.e., an excessively large step size may cause oscillations in the iteration process or even failure to converge; an excessively small step size, on the other hand, may lead the algorithm to fall into a local optimum prematurely, significantly reducing the optimization efficiency.
Meanwhile, the optimization results of this method are equally sensitive to the choice of initial points, i.e., if the initial point falls within the attraction domain of a local optimal solution, the algorithm will directly converge to that local optimum and struggle to escape to a better solution.

To address this issue, we first determine a relatively optimal initial point for GA through a preliminary search, and then employ an adaptive step size strategy for iterative searching until a feasible point is found or the convergence condition is met.
Based on the above discussions, an improved GA approach is proposed to solve the problem \eqref{P_R}.

Initially, we set a discrete point set $\mathcal K$ with $K_r$ uniformly distributed points in the feasible region $\mathcal N$. Therefore, an optimal initial $\tilde \Lambda$ can be find that maximizes $\tilde {\mathcal G}$ as 
\begin{equation}
\tilde \Lambda^{(0)} = \arg \max_{k_r \in \mathcal{K}} \tilde{\mathcal{G}}, \quad k_r \in \{1, \cdots, K_r\}. \label{step1}
\end{equation}

To ensure that in each iteration of the algorithm, the parameter is updated to a feasible point and remains within the feasible region. The update rule for \(\tilde{\Lambda}^{(t+1)}\) is defined as:
\begin{equation}\label{step2}
\tilde{\Lambda}^{(t+1)} = 
\begin{cases} 
\hat{\Lambda} \mod 2\pi, & \text{if } \tilde {\mathcal{G}}(\hat{\Lambda}) \geq \tilde {\mathcal{G}}(\tilde{\Lambda}^{(t)}), \\
\tilde{\Lambda}^{(t)}, & \text{otherwise},
\end{cases}
\end{equation}
where $\hat{\Lambda}=\tilde{\Lambda}^{(t)}+\kappa \nabla \tilde{\mathcal G}(\tilde{\Lambda}^{(t)})$ with the step size $\kappa $ for GA in each iteration, and $(t)$ indicates the value obtained from the last iteration in the inner loop for rotation angle optimization. Since $\tilde \Lambda$ represents a rotation angle with a period of 2$\pi$, the operation $\hat \Lambda \mod 2\pi$ ensures it stays within $[0,2\pi)$. 
Given that the GA method is particularly sensitive to step-size selection, although using a fixed step size is relatively simple to implement, it can easily prevent the optimization results from converging. Thus, $\kappa $ is initialized with a large positive number. 
Whenever $\tilde {\mathcal{G}}(\hat{\Lambda}) \geq  \tilde {\mathcal{G}}(\tilde{\Lambda}^{(t)})$ , $\kappa $ is iteratively updated as $\kappa \leftarrow \frac{\kappa}{2}$ until $\kappa $ is reduced to $\kappa_{min}$. This step size reduction ensures sufficient refinement of $\kappa$ while preventing excessive oscillations in the iterations.
It can be observed that the sequence $\{\mathcal {\tilde G}(\tilde \Lambda^{(t)})\}$ keeps non-decreasing based on updating guidelines \eqref{step2}, leading to the convergence of the sequence.

In this subsection, we focus on the case where $\mathbf \Lambda$ contains only one parameter. However, our method can be readily extended to scenarios involving multiple parameters. Specifically, it suffices to gradually optimize each variable in $\mathbf \Lambda$ using an alternating iteration approach until the optimal $\mathbf \Lambda^*$ is obtained.
\subsection{Overall Algorithm}
Based on the algorithms presented above, the detailed overall alternating optimization algorithm for solving the original problem \eqref{P_1} in $\textbf{Algorithm}$ \ref{alg:final_algorithm}.
\begin{algorithm}[!t]
    \caption{Proposed Alternating Optimization Algorithm for Flexible Beamforming Design}
    \label{alg:final_algorithm}
\begin{algorithmic}[1]
    \State \textbf{Initialize:} $\mathbf{u}^{(0)},\mathbf{F}^{(0)}, \mathbf{W}^{(0)}, \boldsymbol{\xi}_c^{(0)}, \boldsymbol{\xi}_s^{(0)}, \mathbf{\mu}^{(0)}, \mathbf \Lambda^{(0)}, \kappa^{(0)}$, the iteration index $t$ = 0 and the maximal iteration number $I_{AO}$.
    \Repeat
        \State Obtain $\mathbf{W}^{(t)}$ via \eqref{P_W}
        \State Obtain $\mathbf{F}^{(t)}$ via \eqref{F_optimize} and \eqref{Phi_optimize}
        \State Update initial points $\mathbf{\tilde \Lambda}^{(0)}$ via \eqref{step1}
        \State Calculate the gradient $\nabla_{\tilde \Lambda} \tilde {\mathcal{G}}(\tilde \Lambda^{(t)})$ via \eqref{eq:R_derive}
        \State Initialize the step size $\kappa = \kappa^{(0)}$
        \For {each $\tilde \Lambda$ in $\mathbf \Lambda$}
        \Repeat
            \State Compute $\hat{\Lambda}=\tilde{\Lambda}^{(t)}+\kappa \nabla \tilde{\mathcal G}(\tilde \Lambda^{(t)})$
            \State Shrink the step size $\kappa \gets \frac{\kappa}{2}$
            \State Update $\tilde \Lambda^{(t+1)}$ according to \eqref{step2}
        \Until{Converges}
        \EndFor
        \State Obtain $\mathbf{u}^{(t)}$ via \eqref{eq:u_optimal}
        \State Update $\boldsymbol{\mu}^{(t)}$ via \eqref{muk} and \eqref{muK+1}
        \State Update $\boldsymbol{\xi}^{c(t)}$ and $\boldsymbol{\xi}^{s(t)}$ via \eqref{xic} and \eqref{xis}, respectively
        \State Update $t = t + 1$
    \Until The value of the objective function \eqref{P_1} converges or the maximum iteration number $I_{AO}$ is reached.\\
\textbf{Output:} $\mathbf{u}^*, \mathbf{W}^*, \mathbf{F}^*, \mathbf{\Lambda}^*$
\end{algorithmic}
\end{algorithm}

The total complexity of $\textbf{Algorithm}$ \ref{alg:final_algorithm} stems from the solving the variables $\{\mathbf u, \mathbf W, \mathbf F,\mathbf \Lambda, \mathbf \mu, \boldsymbol \xi^c,\boldsymbol \xi_s\}$. 
The computational complexity for calculating receive beamformer $\mathbf u$ is $\mathcal O(N_r^3)$ due to the matrix inversion in \eqref{eq:u_optimal}. Similarly, the computational complexities for optimizing transmit digital beamformer $\mathbf W$ and transmit analog beamformer $\mathbf F$ are both $\mathcal O(N_t^3)$. 
To update the auxiliary variables $\mathbf \mu, \boldsymbol \xi^c$ and $\boldsymbol \xi_s$, the complexities are $\mathcal O(KN_t)$, $\mathcal O(KN_t +K^2)$ and $\mathcal O(N_tK^2)$, respectively. 
The complexity for updating $\mathbf \Lambda$ mainly stems from the derivative calculations \eqref{eq:R_derive} for the $L$ elements in $\mathbf \Lambda$. Thus, the computational complexity of $\mathbf \Lambda$ is $\mathbf O(I_{GA}LN_t^2K^2)$, where $I_{GA}$ is the number of iterations for gradient ascent. Considering the dominant computational steps, the total computational complexity of $\textbf{Algorithm}$ \ref{alg:final_algorithm} is $\mathcal O(I_{AO}I_{GA}LN_t^2K^2)$, where $I_{AO}$ denotes the number of iterations of the AO algorithm.

\section{Numerical results}
In this section, we elaborate on the simulation setup and discussion of the numerical results to evaluate the performance of the proposed algorithm.
\subsection{Simulation Setup}
In the simulation, the BS is the origin of the global coordinate system. The center coordinates of the transmitting uniform planar array (UPA) and the receiving UPA are set to $[0,0,-d_0]^T$ and $[0,0,d_0]^T$, respectively. The BS is equipped with $N_t$ transmit antennas, $N_r$ receive antennas, and $N_{RF}$ RF chains. Each RF chain is connected to $M$ antennas via analog phase shifters. The antenna's positions are set according to the transmit and receive UPAs with the largest achievable apertures $D \times D$. For each UPA, the panel size is $D \times D$ with $D=10\lambda$. The BS operates at a carrier frequency of 30GHz. It serves $K$ = 4 users and senses $s=1$ target, with the interference of $C=$3 clutters. The users and clutters are randomly distributed in front of the BS, i.e., $\phi_d \sim \mathcal U(-\frac{\pi}{2},\frac{\pi}{2})$, and $\theta_d \sim \mathcal U(0,\pi)$ for d $\in \{1,\dots,K+C\}$. The sensing target is assumed to be located at $(\frac{\pi}{3},\frac{\pi}{3})$.
The reflection coefficients of both the target and the clutters, $\zeta _k$ and $\zeta _s$ , are set to 2 dB. In addition, the power budget at the BS is set as $P_{max}$ = 0 dBm. The average noise power is $\sigma^2_k = \sigma^2_s =$ -80 dBm. The penalty factor $\eta=$ 10. The preliminary step size $\kappa = 100$ and the minimum step size $\kappa_{min}$ = 0.01. It is worth noting that, unless otherwise specified, the optimized rotation variable $\mathbf \Lambda$ includes all rotation parameters of the TP and RP, i.e., $\mathbf \Lambda = [\alpha^\mathrm t, \beta^\mathrm t, \gamma^\mathrm t,\alpha^\mathrm r, \beta^\mathrm r, \gamma^\mathrm r]$. We perform 200 Monte Carlo simulations for the following results.
The simulation parameters for the proposed system are listed in Table. \ref{system_params}
\begin{table}[!t]
\centering
\renewcommand{\arraystretch}{1.3}
\caption{System Parameters}
\label{system_params}
\begin{tabular}{|c|c|}
\hline
\textbf{Parameters} & \textbf{Value} \\ \hline
Number of antennas at the BS & $N_t = N_r= 16$ \\ \hline
Number of users & $K = 4$ \\ \hline
Number of clutters & $C = 3$ \\ \hline
Wavelength & $\lambda = 0.01\ \text{m}$ \\ \hline
Size of antenna array & $D = 10\lambda$ \\ \hline
Number of RF chains & $N_{RF} = 8$ \\ \hline
Penalty factor & $\eta = 10$ \\ \hline
Preliminary step size & $\kappa =100$ \\ \hline
Minimum step size  & $\kappa_{min} =1$ \\ \hline
Power budget & $P_{max} = 0\ \text{dBm}$ \\ \hline
Position variable of the UPA center& $d_0 = 2 \text{m}$ \\ \hline
Average noise power for users \& target & $\sigma_k^2 = \sigma_s^2 = -80\ \text{dBm}$ \\ \hline
Reflection coefficient for target \& clutters& $\zeta_s =\zeta_c= 3$ \\ \hline
Azimuth of users and clutters & $\phi_{\text{d}} \in (-\frac{\pi}{2}, \frac{\pi}{2})$ \\ \hline
Elevation of users and clutters & $\theta_{\text{d}} \in (0, \pi)$ \\ \hline
Location of the target & $(\phi_s, \theta_s) = (\frac{\pi}{3}, \frac{\pi}{3})$ \\ \hline
Path loss of the users \& target & $\rho=-30\ \text{dB}$ \\ \hline

\end{tabular}
\end{table}

To verify the effectiveness of the proposed scheme with Algorithm \ref{alg:final_algorithm}, we introduce six different schemes as follows, 
\begin{enumerate}
    \item The TP and RP have fixed rotation angles, with RF chains adopting a sub-connected structure ($\textbf{FPA, SC}$).
    \item The TP and RP have fixed rotation angles, with RF chains adopting a fully-connected structure ($\textbf{FPA, FC}$).
    \item The TP and RP have fixed rotation angles, using a fully-digital beamformer ($\textbf{FPA, FD}$).
    \item The TP can execute 3D rotations and RP has fixed rotation angles, with RF chains adopting a sub-connected structure ($\textbf{Transmit RA \& receive FPA, SC}$).
    \item The TP has fixed rotation angles and RP can execute 3D rotations, with RF chains adopting a sub-connected structure ($\textbf{Transmit FPA \& receive RA, SC}$).
    \item Both the TP and RP can execute 3D rotations, with RF chains adopting a sub-connected structure ($\textbf{RA, SC}$).
\end{enumerate}

\subsection{Convergence Evaluation}
\begin{figure}[!t]
    \centering
    \includegraphics[width=\linewidth]{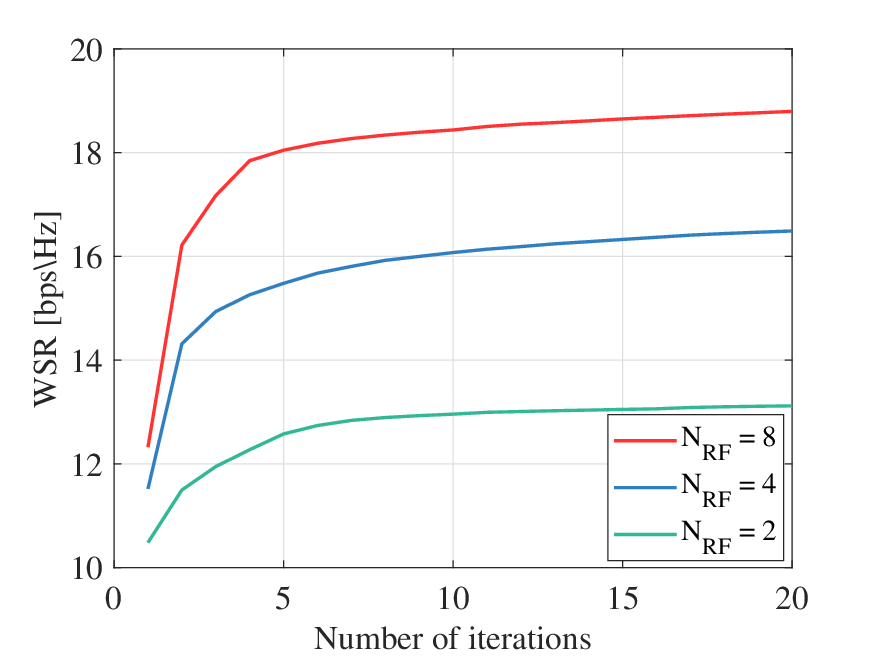}
    \captionsetup{font={small}}
    \caption{\justifying Convergence evaluation of Algorithm 2. $N_t=N_r=16$, $\varpi_c=0.5$.}
    \label{fig:convergence}
\end{figure}
First, we evaluate the convergence of the proposed AO algorithm under different numbers of RF chains in Fig. \ref{fig:convergence}. As can be seen, the algorithm converges rapidly in all scenarios, and the objective value tends to stabilize after 20 iterations.
Moreover, the WSR increases as the number of RF chains increases. This is because the performance of the analog beamformer $\mathbf F$ is constrained by the number of RF chains. Specifically, the hybrid beamformer can achieve a higher DoFs to further enhance the overall system performance when the number of RF chains increases.

\subsection{Beamfocusing of Algorithm 2}
To verify the effect of Algorithm 2 on improving sensing performance, the optimized beampattern by Algorithm 2 is compared with the optimized beampattern by FPA scheme. It intuitively demonstrates the advantages of RA in the sensing function of ISAC system proposed in this paper.
The beampattern at the angle ($\phi,\theta$) is computed as
 \begin{equation}
\text{BP}(\phi,\theta)=||\mathbf u^H \mathbf G_s(\phi,\theta)\mathbf F \mathbf W||^2.
\end{equation}
Set $\omega_s=1$ to eliminate the interference from CUs. The number of transmit antennas and receive antennas are both $N_t=N_r=16$, the number of RF chains is $N_{RF}=4$, and the target is assumed to be located at ($60^\circ,60^\circ$). Fig.3(a) shows the optimized beampattern under the FPA scheme, which indicates that the beam fails to focus well on the target position. 
In Fig.3(b), the side length of the UPA is set to $D=5\lambda$, and it is observed that under the RA-aided scheme, the beam can focus precisely on the desired target position with extremely lower interference to other positions. 
In the far-field scenario, as the side length of the UPA increases as shown in Fig.3(c), the width of the beam main lobe narrows gradually. However, due to the increase in the spacing between antennas in the array, the array periodicity is enhanced and the number of grating lobes increases, resulting in more interference. 
\begin{figure*}[!t]
    \centering  
    \begin{subfigure}[b]{0.3\linewidth}  
    \centering  
    \includegraphics[width=\linewidth]{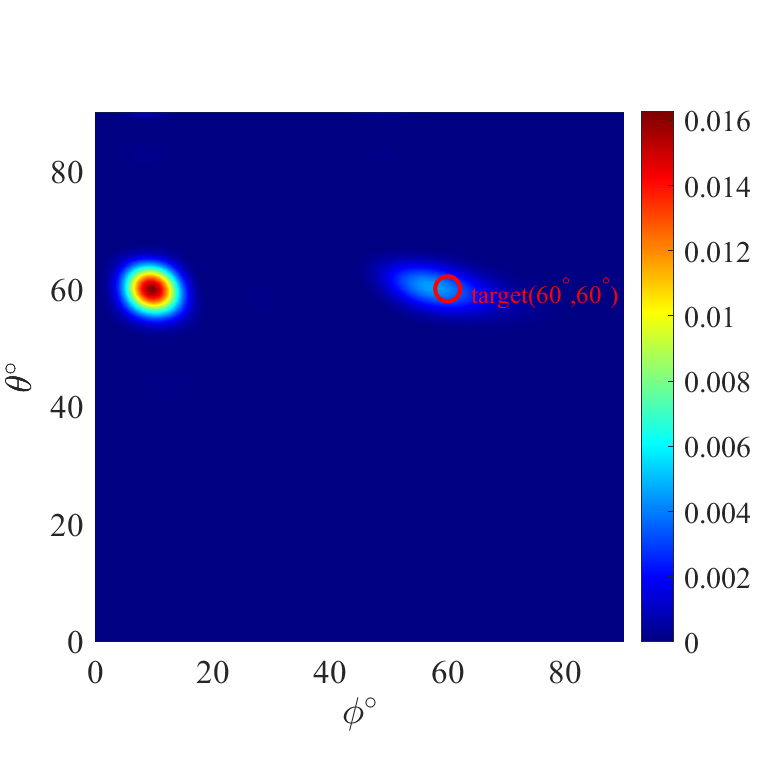}  
    \caption{FPA, $D=5\lambda$.}  
\end{subfigure}
\hfill  
\begin{subfigure}[b]{0.3\linewidth}
    \centering
    \includegraphics[width=\linewidth]{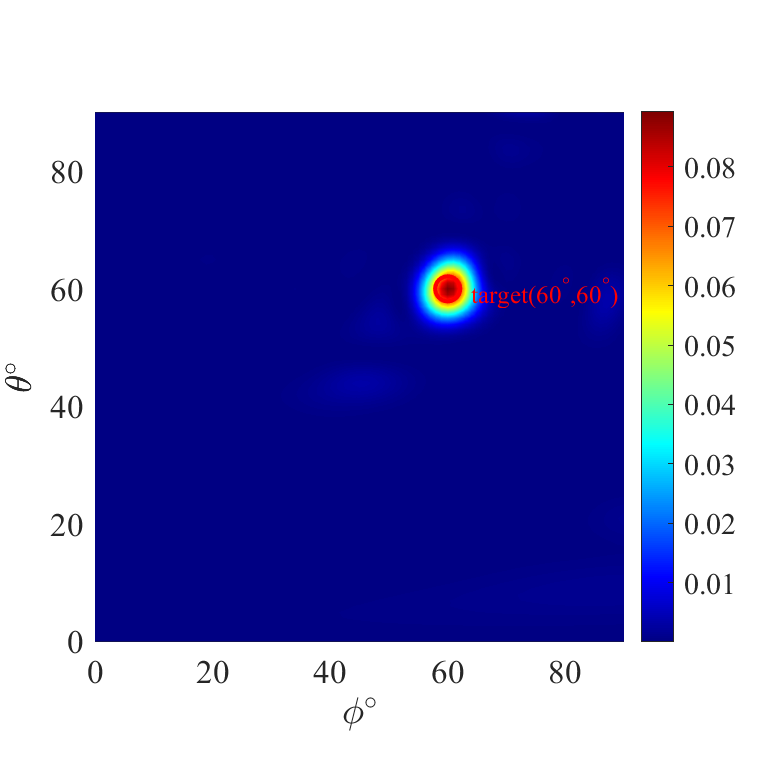}
    \caption{Proposed Algorithm, $D=5\lambda$.}
\end{subfigure}
\hfill  
\begin{subfigure}[b]{0.3\linewidth}
    \centering
    \includegraphics[width=\linewidth]{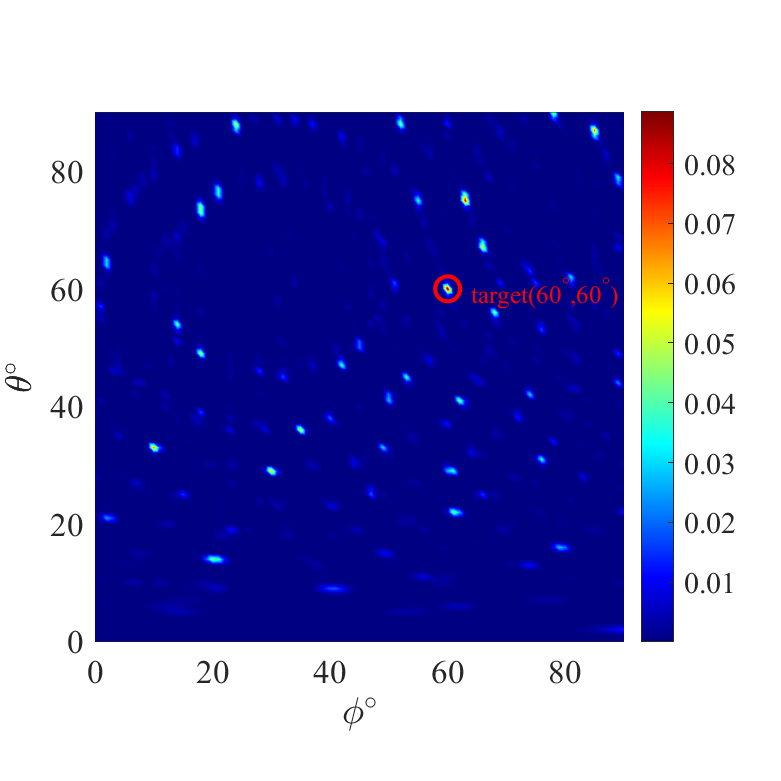}
    \caption{Proposed Algorithm, $D=50\lambda$.}
\end{subfigure}
\captionsetup{justification=raggedright, singlelinecheck=false}
\caption{\justifying Beampattern under different configurations.}
\end{figure*}

\subsection{Simulation Results}
Fig. \ref{fig:WSR} compares the relationship between the WSR and the maximum transmit signal power $P_{max}$ among different schemes. With the increase of transmit signal power, the WSR values of both the RA and FPA schemes keep rising, which is because a higher transmit power can enhance the signal strength.
It is noteworthy that the performance of the RASC scheme is superior to that of all other schemes. Specifically, when the transmit power is 0 dBm, it achieves performance improvements of $16.96 \%$, $60.42 \%$, and $82.39 \%$ compared with the FPAFD, FPAFC and FPASC schemes, respectively. This is attributed to the RA scheme’s ability to achieve effective improvement in channel conditions and exhibit higher flexibility in beamforming design.
When only the RP is rotatable, the system can only optimize the echo channel but cannot improve the DoFs of beamforming at transmitter, resulting in limited performance improvement. When only the TP is rotatable, although it can effectively enhance user performance compared with the FPASC scheme, it is still inferior to the proposed RASC scheme. The latter realizes the optimization of the echo channel and the improvement of beamforming capability simultaneously through the rotation of TP and RP, which fully proves the effectiveness of the proposed scheme in improving the performance of the ISAC system.
\begin{figure}[!t]
    \centering
    \includegraphics[width=\linewidth]{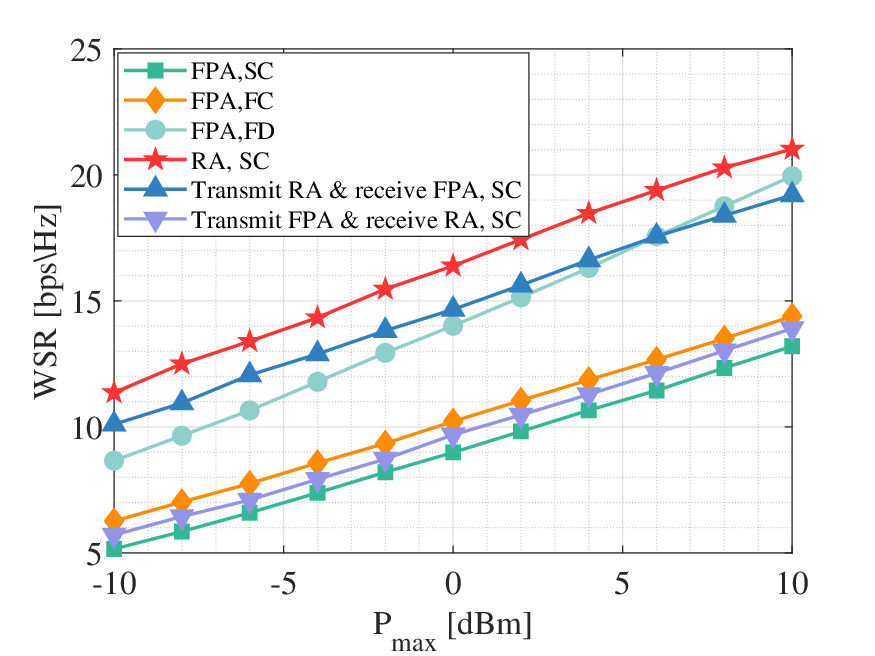}
    \captionsetup{font={small}}
    \caption{\justifying WSR versus $P_{max}$. $N_t=N_r=16$, $M=4$, $\varpi_c=0.5$.}
    \label{fig:WSR}
\end{figure}

Fig. \ref{fig:SCNR} shows the relationship between the sensing rate and the maximum transmit power $P_{max}$ under different schemes. We set $\omega_s=0.8$, which means the ISAC system is sensing-focused. As $P_{max}$ increases, the sensing rate of each scheme shows an upward trend. It can be intuitively observed from the figure that array rotation brings significant gain to sensing performance, and at the same time, the impact of different precoding structures on the sensing rate can also be clearly seen. 
In particular, although the system is sensing-focused, the sensing performance of the RARoT scheme is superior to that of the RARoR scheme, due to the rotation of the TP can effectively enhance the beamforming capability. 

Fig. \ref{fig:SINR} illustrates the sum of communication rate versus the maximum transmit power $P_{max}$ at a communication-focused ISAC with $\omega_c=0.8$. A similar trend with Fig. \ref{fig:SCNR} can be observed, and the performance improvement on communication is more significant than that of sensing. 
Notably, when the system is communication-focused, compared with FPASC scheme, the scheme that only rotates the RP will instead lead to a decrease in the system's communication capability. This is because although the rotation of the RP can effectively improve the system's sensing capability, it will cause the system to allocate more transmit power to sensing during the resource trade-off process, thereby resulting in a slight decline in the communication performance.

\begin{figure*}[!t]
    \begin{minipage}[t]{0.48\linewidth}  
        \centering
        \includegraphics[width=\linewidth]{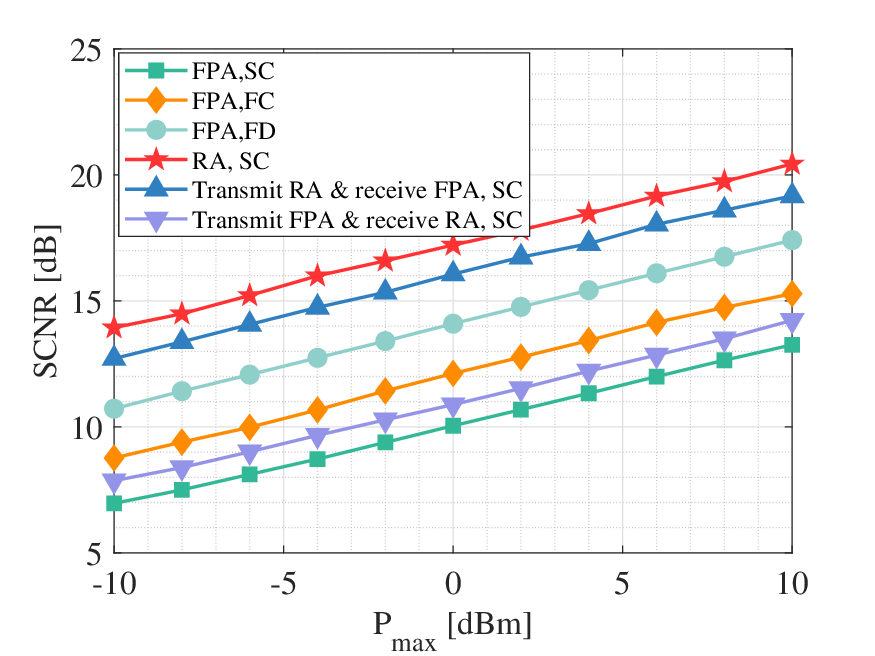}
        \captionsetup{font={small}}
        \caption{Sensing rate versus $P_{max}$. $N_t=N_r=16$, $M=4$, $\varpi_s=0.8$.}
        \label{fig:SCNR}
    \end{minipage}
    \hfill  
    \begin{minipage}[t]{0.48\linewidth}  
        \centering
        \includegraphics[width=\linewidth]{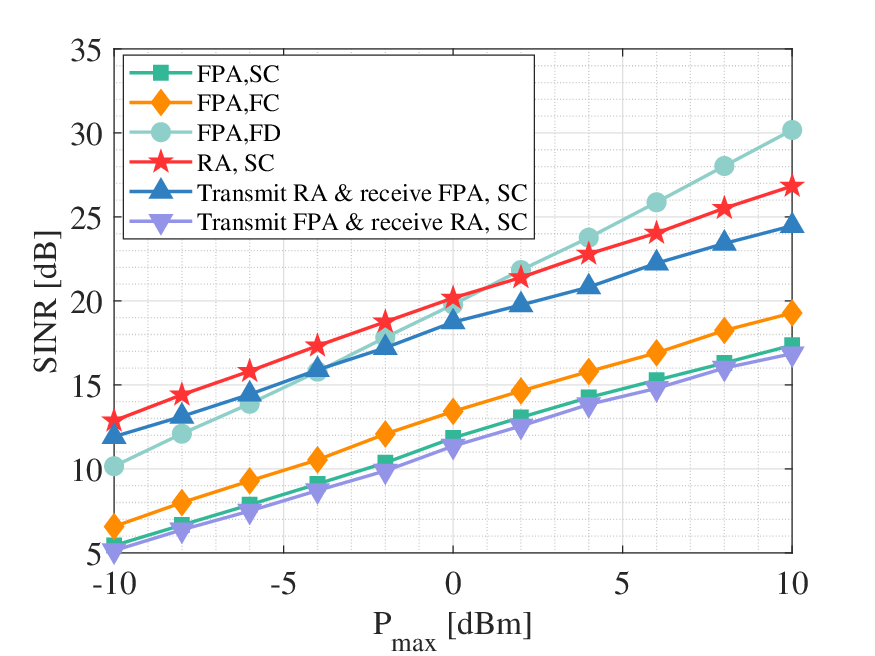}
        \captionsetup{font={small}}
        \caption{Sum of communication rate versus $P_{max}$. $N_t=N_r=16$, $M=4$, $\varpi_c=0.8$.}
        \label{fig:SINR}
    \end{minipage}
\end{figure*}

\begin{figure*}[!t]
    \begin{minipage}[t]{0.48\linewidth}  
        \centering
        \includegraphics[width=\linewidth]{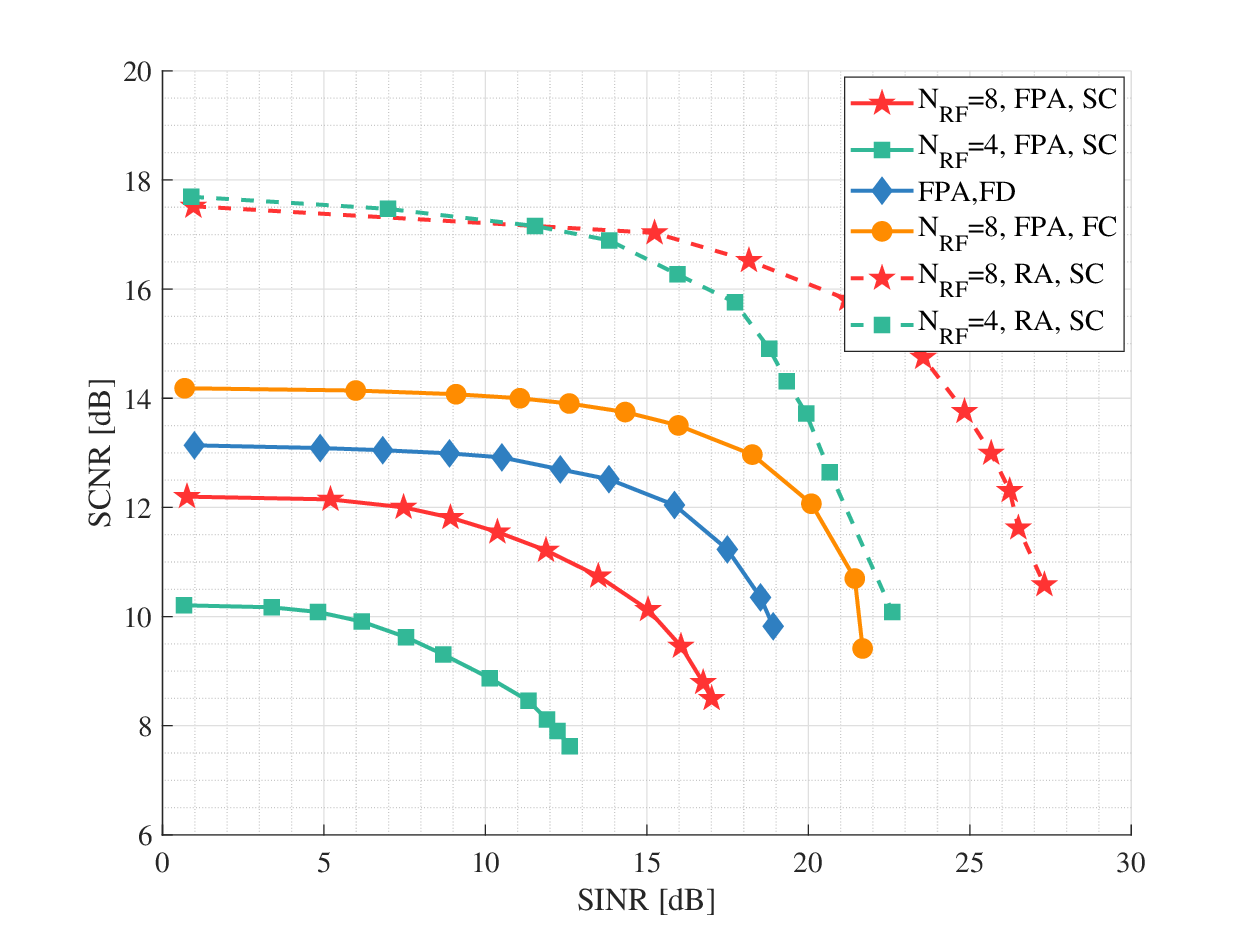}
        \captionsetup{font={small}}
        \caption{Trade-off between sensing and communication performance.}
        \label{fig:tradeoff}
    \end{minipage}
    \hfill  
    \begin{minipage}[t]{0.48\linewidth}  
        \centering
        \includegraphics[width=\linewidth]{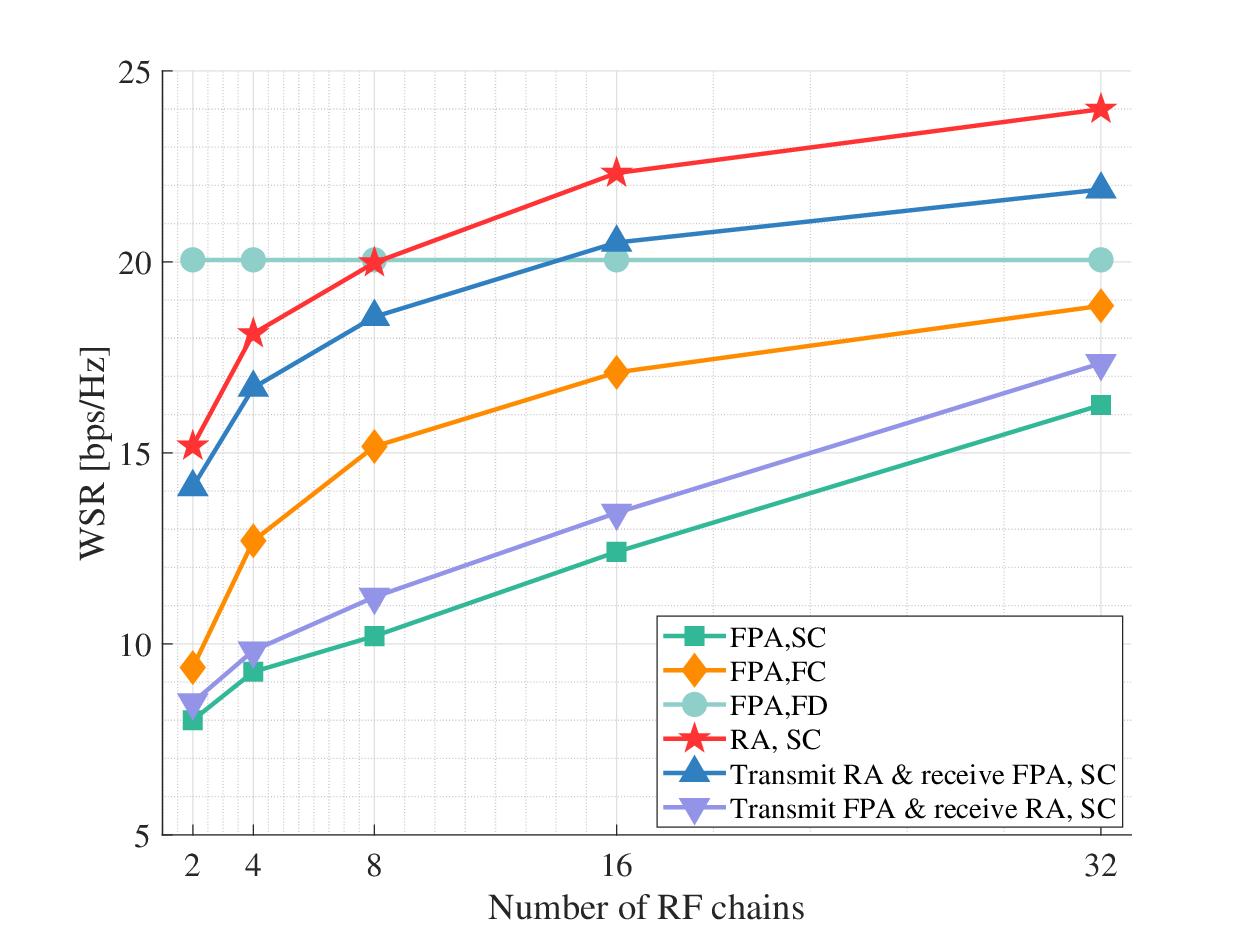}
        \captionsetup{font={small}}
        \caption{WSR versus $M$. $N_t=N_r=64$, $D=10\lambda$, $\varpi_c=0.5$.}
        \label{fig:M}
    \end{minipage}
\end{figure*}
Fig. \ref{fig:tradeoff} analyzes the trade-off between sensing and communication with $P_{max}=0$ dBm and $D=10\lambda$.
By adjusting the sensing rate weight $\omega_s$ and the communication rate weight $\omega_c$, we obtain the trade-off curves between the sensing rate and the communication rate.
To analyze the impact of RA and sub-connected structure on system performance, let $N_t=N_r=16$. Meanwhile, two sub-connection structures ($N_{RF}=4, N_{RF}=8$) and a fully-connected structure $(N_{RF}=8)$ are considered.
As shown in Fig. \ref{fig:tradeoff}, RA can significantly improve system performance compared with FPA. Under the FPA scheme, the system performance of the precoding structure is inferior to that of the fully-digital beamformer. But the performance advantage of the fully-digital beamformer is achieved at the cost of higher hardware expenses.
For the precoding structure, the more RF chains, the better the system performance. When the number of RF chains is the same, the system performance of the fully-connected structure is superior to that of the sub-connected structure. However, the fully-connected structure uses more PSs, which makes the system hardware more complex.
It is worth noting that under the RA scheme, the system performance is similar between the cases with 4 RF chains and 8 RF chains when the communication weight $\omega_c=0$. This indicates that the number of RF chains has almost no impact on the target SCNR. However, $\omega_c$ increases and the system is gradually communication-focused, the performance of the system with 8 RF chains is significantly superior to that with 4 RF chains.
Hence, appropriate weighted factors and configuration parameters need to be considered in partial engineering.

Finally, Figure. \ref{fig:M} illustrates the relationship between the WSR and the number of RF chains. 
The number of 8 $ \times $ 8 UPAs is $N_t=N_r=64$. It can be observed that as the $N_{RF}$ increases, the WSR under the hybrid beamforming design shows an upward trend. 
This can be explained from two perspectives. On one hand, an increased number of RF chains provides more DoFs for the downlink beamfocusing and thus mitigates the inter-user interference among CUs; On the other hand, with an increased number of RF chains, a more sophisticated probing signal design can be achieved to realize better positioning performance. 
When the number of RF chains exceeds 8, the system performance of the RASC scheme is superior to that of the FPA fully-digital beamforming scheme. This result indicates that the proposed RA scheme can significantly reduce the number of RF chains while still achieving performance that surpasses that of the fully-digital beamforming scheme.

\section{Conclusion}
In this paper, we study an RA-aided sub-connected hybrid beamforming design for ISAC systems. By jointly optimizing the angle of the rotatable array and the hybrid beamforming vector, we maximize the system's achievable WSR for both communications and sensing. 
An AO-based framework is proposed to solve this non-convex optimization problem. We transform the objective function using the FP method, and alternately solve five sub-problems until the AO algorithm converges. 
Firstly, the MMSE filtering method is adopted to optimize the receive beamformer. Moreover, the closed-form expressions is derived to update the digital and analog beamforming matrices. Then we derived the derivative expression of the objective function with respect to the rotation angle through research and analysis. Based on this expression, a GA method is employed to optimize the angle parameters of the rotatable array.
Simulation results verified the effectiveness of the proposed algorithm and the advantages of the considered RA-aided sub-connected structure ISAC scheme compared to FPA-based schemes. 
Remarkably, under some weight parameters, the performance of RA-aided hybrid beamforming ISAC system equipped with 64 transmit/receive antennas and 8 RF chains is outperformed to that of FPA fully-digital ISAC system equipped with 64 transmit/receive antennas.
This finding provides valuable insights for reducing hardware costs in engineering applications. In addition, apart from the sensing rate, this paper also derives and analyzes the sensing beampattern. The results show that RA can effectively optimize the beamfocusing effect and improve the sensing accuracy.

\end{document}